\let\normalvec\vec
\let\vec\normalvec
\newenvironment{alg}{
  \begin{algorithm}[htbp]
    \SetKwInOut{Input}{input}
    \SetKwInOut{Output}{output}
  }{
  \end{algorithm}
}
\newcommand{\mc}[1]{\ensuremath{\mathcal{#1}}}
\newcommand{\bo}[1]{\mathcal{O}(#1)}
\newcommand{\conv}{\mathsf{conv}}
\newcommand{\real}{\mathbb{R}}
\newcommand{\nat}{\mathbb{N}}
\newcommand{\Soberon}{Sober\'{o}n}
\newcommand{\GarciaColin}{Garc\'{i}a-Col\'{i}n}
\newcommand{\Caratheodory}{Carath\'{e}odory}
\title{Algorithms for Tolerant Tverberg Partitions}
\author{Wolfgang Mulzer\thanks{Supported in part by DFG Grants MU 3501/1 and MU 3501/2.} \and Yannik Stein\thanks{
    Supported by the Deutsche Forschungsgemeinschaft within the research
    training group `Methods for Discrete Structures' (GRK 1408).
    }
}
\institute{ Institut f\"ur Informatik, Freie Universit\"at Berlin\\
            Takustr. 9, 14195 Berlin, Germany\\
            \email{\{mulzer, yannikstein\}@inf.fu-berlin.de}
          }
\begin{document}
\maketitle

\begin{abstract}
  Let $P$ be a $d$-dimensional $n$-point set. A partition $\mathcal{T}$ of $P$
  is called a \emph{Tverberg} partition if the convex hulls of all sets
  in $\mathcal{T}$ intersect in at least one point. We say that $\mathcal{T}$ is
  \emph{$t$-tolerant} if it remains a Tverberg partition after deleting
  any $t$ points from $P$. Sober\'{o}n and Strausz proved that there is always a
  $t$-tolerant Tverberg partition with $\lceil n / (d+1)(t+1) \rceil$ sets.
  However, no nontrivial algorithms for computing or approximating
  such partitions have been presented so far.

  For $d \leq 2$, we show that the Sober\'{o}n-Strausz bound can be improved,
  and we show how the corresponding partitions can be found in polynomial time.
  For $d \geq 3$, we give the first polynomial-time approximation algorithm
  by presenting a reduction to the regular Tverberg problem (with no
  tolerance).
  Finally, we show that it is coNP-complete to determine whether a given
  Tverberg partition is $t$-tolerant.

  \keywords{Tverberg partition; tolerant Tverberg partition; high-dimensional
    approximation; coNP-completeness.}
\end{abstract}

\section{Introduction}
Let $P\subset\real^d$ be a point set of size $n$. A point $c\in\real^d$ has
\emph{(Tukey) depth}  $m$ with respect to $P$ if every closed half-space containing
$c$ also contains at least $m$ points from $P$. A point of depth $\lceil n/(d+1)
\rceil$ is called a \emph{centerpoint} for $P$. The well-known centerpoint
theorem~\cite{Rado1946} states that every point set has a centerpoint. Centerpoints
are of great interest since they constitute a natural generalization of the median to
higher-dimensions and are invariant under scaling or translations
and robust against outliers.

Chan~\cite{Chan2004} described a randomized algorithm that
finds a $d$-dimensional centerpoint in expected time
$\bo{n^{d-1}}$. In fact, Chan solves the seemingly harder problem of finding a
point with maximum depth, and he conjectures that his result is optimal.
Since this is infeasible in higher
dimensions, approximation algorithms are of interest. Already in 1993, Clarkson et
al.~\cite{Clarkson1996} developed a Monte-Carlo algorithm that finds a point
with depth $\Omega(n/(d+1)^2)$ in time $\bo{d^2(d\log n +
  \log(1/\delta))^{\log(d+2)}}$, where $\delta$ is the error-probability.
Teng~\cite{Teng1992} proved that it is coNP-complte to test whether
a given point is a centerpoint. Thus, we do not know how to verify
efficiently the output of the algorithm by Clarkson et al. For a subset
of centerpoints, \emph{Tverberg partitions}~\cite{Tverberg1966} provide
polynomial-time checkable proofs for the depth: a \emph{Tverberg $m$-partition} for a
point set $P\subset \real^d$ is a partition
$P = T_1 \dot\cup T_2 \dot\cup \cdots \dot\cup T_m$ of $P$ into $m$ sets such
that $\bigcap_{i=1}^{m} \conv(T_i)\neq \emptyset$.  Each half-space that
intersects $\bigcap_{i=1}^{m} \conv(T_i)$ must contain at least one
point from each $T_i$, so each point in $\bigcap_{i=1}^{m} \conv(T_i)$ has
depth at least $m$.
Tverberg's theorem states that depth $m = \lceil n/(d+1) \rceil$ can always be
achieved.
Thus, there is always a centerpoint with a corresponding Tverberg partition.
Miller and Sheehy~\cite{Miller2010} developed
a deterministic algorithm that computes a
point of depth $\lceil n/2(d+1)^2\rceil$ in time $n^{\bo{\log d}}$ together
with a corresponding Tverberg partition. This was recently improved by Mulzer
and Werner~\cite{Mulzer2013}. Through recursion on the dimension, they can find
a point of depth $\lceil n/4(d+1)^3 \rceil$ and a corresponding Tverberg partition
in time $d^{\bo{\log d}} n$.

\begin{figure}[htbp]
  \newcommand{\imgwidth}{0.42\textwidth}
  \begin{center}
    \subfigure[$0$-tolerant Tverberg $5$-partition]{
        \label{intro:fig:tolex:untol}\includegraphics[width=\imgwidth]{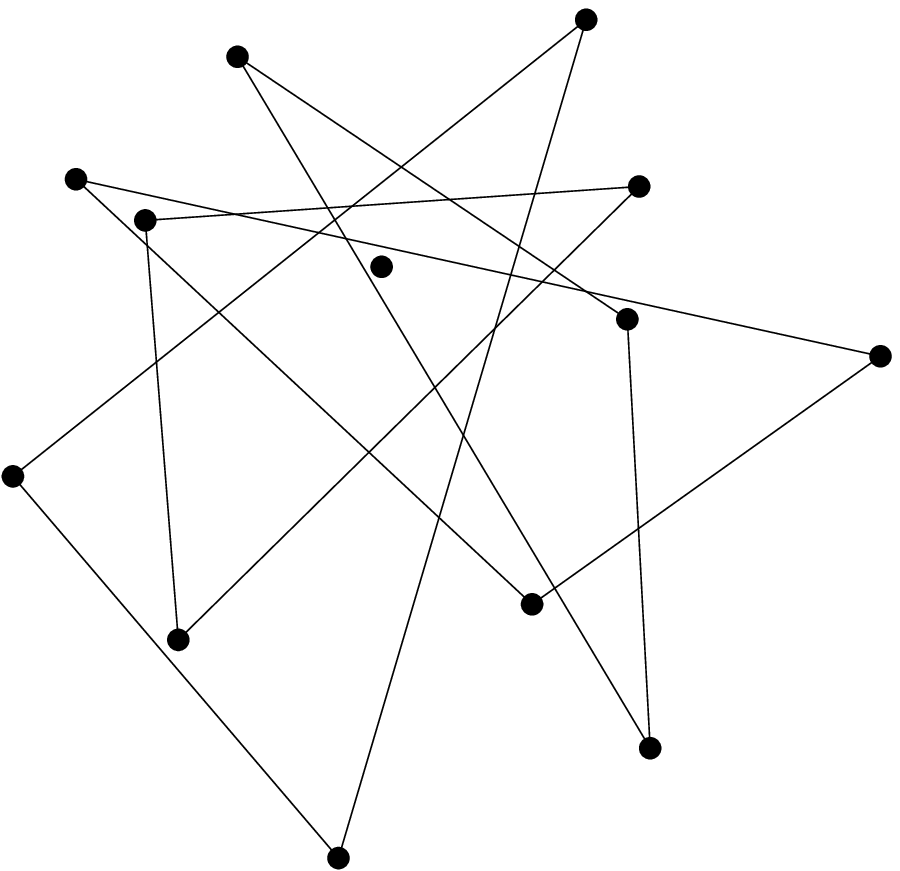}
    }
    \hspace{1cm}
    \subfigure[$1$-tolerant Tverberg $3$-partition]{\label{intro:fig:tolex:tol}\includegraphics[width=\imgwidth]{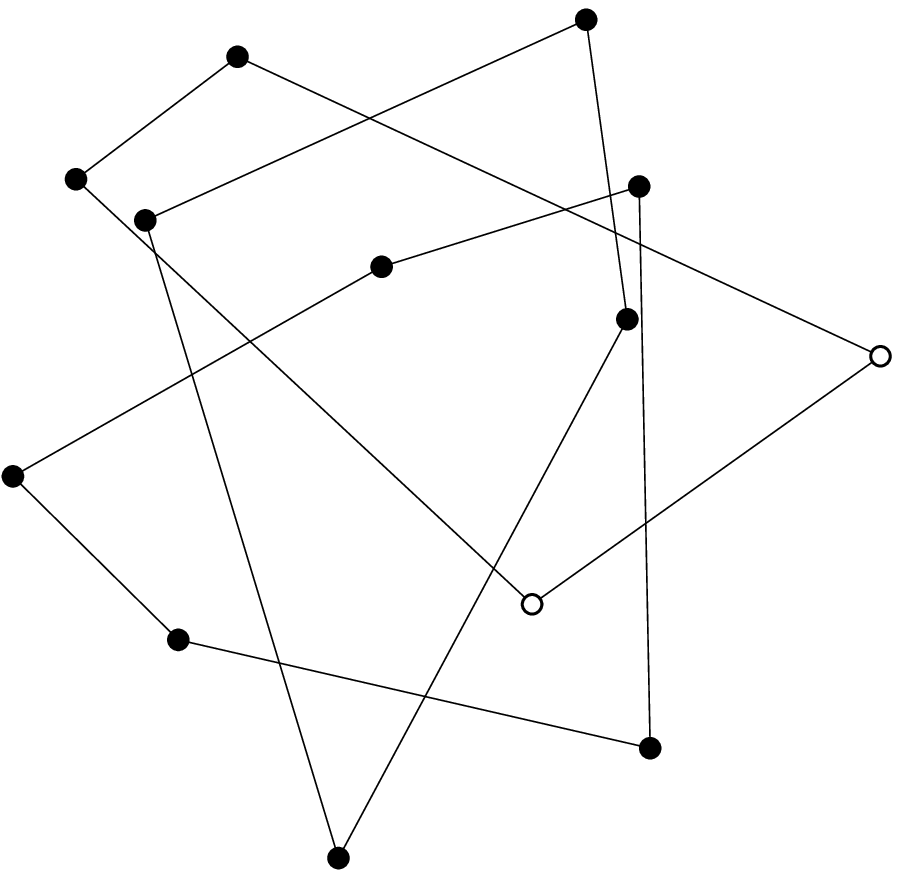}}
  \end{center}
  \caption{
    \subref{intro:fig:tolex:untol} A regular Tverberg partition
    for 13 points. One set of the partition consists of
    a single point. The removal of any point would separate the
    convex hulls.
    \subref{intro:fig:tolex:tol} A $1$-tolerant Tverberg partition
    for the same point set. The Tverberg partition is not
    $2$-tolerant since the removal of both white points would
    separate the convex hulls.
  }
  \label{intro:fig:tolex}
\end{figure}

Let $\mc{T}$ be a Tverberg $m$-partition for $P$. If any nonempty subset
$R\subset P$ is removed from $P$, we no longer know if $\bigcap_{i=1}^m
\conv(T_i \setminus R) \neq \emptyset$. In the worst-case, the maximum number of
sets in $\mc{T}$ whose convex hulls still have a nonempty intersection is
$m-|R|$. Thus, after removing only $m$ points, the convex hulls of sets in $\mc{T}$
could be pairwise disjoint and do not longer serve as a depth-certificate for
points in the intersection of the convex hulls. In order to give stronger
guarantees if
points are removed, we study Tverberg
partitions that remain Tverberg partitions even after removing $t$ arbitrary
points from $P$. We call a Tverberg partition \emph{$t$-tolerant}
if $\bigcap_{i=1}^m \conv(T_i \setminus R)$ is nonempty for any subset $R\subset
P$ of size at most $t$. To distinguish tolerant Tverberg partitions from
Tverberg partitions with no tolerance, we  call the latter \emph{regular}
Tverberg partitions. Furthermore, we say that a set $R$ \emph{separates} the convex hulls of the
sets in $\mc{T}$ if $\bigcap_{i=1}^m \conv(T_i \setminus R) = \emptyset$. See
Figure~\ref{intro:fig:tolex} for two examples.
In 1972, Larman~\cite{Larman1972} proved that
every set of size $2d+3$ admits a $1$-tolerant Tverberg
$2$-partition.  This was motivated by a problem proposed by
McMullen: find the largest number $n$ such that any $n$-point set
can be made convex by applying a permissible projective
transformation. Here, permissible means
that no point in the set is mapped to a point at infinity.
Larman also showed that there are sets of size $d+\Theta(\sqrt{d})$
that do not have a $1$-tolerant Tverberg
$2$-partition if $d\geq 2$. This lower bound was later improved by
Ram\'irez Alfons\'in~\cite{Alfonsin2001} to
$5/3 d + 4/3$ for $d\geq 4$. \GarciaColin{}~\cite{GarciaColin2007} generalized
Larman's upper bound, showing that sets of size $(t+1)(d+1)+1$ always have a
$t$-tolerant Tverberg $2$-partition, and asked for a general bound to guarantee the
existence of $t$-tolerant Tverberg $m$-partitions. Later, Montejano and
Oliveros~\cite{Montejano2011} conjectured that every set of size $(t+1)(m-1)(d+1)+1$ admits a
$t$-tolerant Tverberg $m$-partition. This
was proved by \Soberon{} and Strausz~\cite{Soberon2012} who adapted Sarkaria's
proof of Tverberg's theorem~\cite{Sarkaria1992} to the tolerant setting:

\begin{theorem}[\Soberon{}-Strausz-Theorem~\cite{Soberon2012}]\label{thm:sobstrau}
  Let $P\subset \real$ be a set of size $(t+1)(m-1)(d+1)+1$. Then, there exists
  a $t$-tolerant Tverberg $m$-partition for $P$.

  Equivalently, for all $n$-point sets there
  exists a $t$-tolerant Tverberg $\lceil n/(d+1)(t+1)\rceil$-partition.
\end{theorem}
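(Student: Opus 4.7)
The plan is to adapt Sarkaria's tensor-product proof of Tverberg's theorem to the tolerant setting, as the paragraph preceding the statement hints. Sarkaria's argument reduces classical Tverberg in $\real^d$ to B\'ar\'any's colorful Carath\'eodory theorem in dimension $D=(d+1)(m-1)$: any $D+1$ color classes in $\real^D$, each with $0$ in its convex hull, admit a rainbow selection whose convex hull also contains $0$. I would reuse the tensor construction unchanged and replace colorful Carath\'eodory by a \emph{tolerant colorful Carath\'eodory} statement: for any $(t+1)D+1$ color classes in $\real^D$, each containing $0$ in its convex hull, there is a single rainbow selection $(x_i)_{i=1}^n$ such that $0\in\conv(\{x_i : i\notin R\})$ for every $R\subseteq[n]$ with $|R|\le t$. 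With $D=(d+1)(m-1)$ the parameter $(t+1)D+1$ matches $|P|$ exactly.

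\textbf{Sarkaria reduction.} Lift each $p_i\in P$ to $\bar p_i=(p_i,1)\in\real^{d+1}$ and fix vectors $v_1,\ldots,v_m\in\real^{m-1}$ with $\sum_j v_j=0$ and any $m-1$ of them linearly independent. Set $F_i=\{\bar p_i\otimes v_j : j\in[m]\}\subseteq\real^D$; the identity $\sum_j \bar p_i\otimes v_j=\bar p_i\otimes(\sum_j v_j)=0$ puts $0\in\conv(F_i)$. Applying the tolerant colorful Carath\'eodory yields a map $\pi\colon[n]\to[m]$ with $x_i=\bar p_i\otimes v_{\pi(i)}$, and I would take $T_j=\{p_i:\pi(i)=j\}$ as the candidate partition. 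Given $R$ with $|R|\le t$, write a tolerance witness $\sum_{i\notin R}\lambda_i x_i=0$ with $\lambda_i\ge 0$ summing to $1$; regrouping according to $\pi$ yields $\sum_j \bigl(\sum_{\pi(i)=j,\,i\notin R}\lambda_i\bar p_i\bigr)\otimes v_j=0$, and the linear independence of any $m-1$ of the $v_j$ forces the inner sums to coincide with a common vector $(q,s)\in\real^{d+1}$. Summing over $j$ gives $ms=\sum_{i\notin R}\lambda_i=1$, so $s=1/m>0$, and after rescaling the point $mq$ lies in $\conv(T_j\setminus R)$ for every $j$, which is exactly $t$-tolerance.

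\textbf{Where the difficulty sits.} The heart of the argument is the tolerant colorful Carath\'eodory statement itself. I would prove it by induction on $t$, with B\'ar\'any's classical theorem as the $t=0$ base. The real obstacle is that a \emph{single} rainbow selection must cope with all $\binom{n}{t}$ possible removals simultaneously, so the classical theorem cannot simply be invoked on each $(n-t)$-sized subproblem independently. A natural strategy is to first apply colorful Carath\'eodory to a block of $D+1$ color classes to seat $0$ in the rainbow convex hull, and then to absorb the remaining $tD$ classes one at a time while maintaining the invariant that every size-$(n-t)$ subselection still contains $0$. Verifying that exactly $(t+1)D+1$ classes are enough to carry this augmentation through, most likely via a Helly- or LP-duality-type argument on the current selection, is the delicate step in which the Sober\'on--Strausz bound is actually consumed.
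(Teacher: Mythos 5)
The paper does not actually prove Theorem~\ref{thm:sobstrau}: it is a cited result attributed to \Soberon{} and Strausz~\cite{Soberon2012}, and the authors merely remark that it was obtained by adapting Sarkaria's argument. So there is no ``paper's own proof'' to compare against, only the cited source.

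That said, your sketch is exactly the \Soberon{}--Strausz strategy in outline, and the part you actually carry out is correct. The Sarkaria lift $\bar p_i = (p_i,1)$, the choice of $v_1,\dots,v_m\in\real^{m-1}$ with $\sum_j v_j=0$ and any $m-1$ independent, the color classes $F_i=\{\bar p_i\otimes v_j\}$ with $0\in\conv(F_i)$, and the regrouping argument are all right; in particular, from $\sum_j u_j\otimes v_j=0$ with $u_j=\sum_{\pi(i)=j,\,i\notin R}\lambda_i\bar p_i$ you correctly deduce $u_1=\dots=u_m=(q,s)$ (since the only linear relation among the $v_j$ is $\sum_j v_j=0$), then $ms=1$, and then $mq\in\conv(T_j\setminus R)$ for every $j$. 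The dimension count $(t+1)D+1$ with $D=(d+1)(m-1)$ also lines up with the claimed bound, and specializes to B\'ar\'any's theorem at $t=0$.

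The genuine gap is the one you flag yourself: the entire weight of the theorem now rests on the ``tolerant colorful Carath\'eodory'' claim, and you do not prove it. You propose a seed-and-absorb induction (``absorb the remaining $tD$ classes one at a time while maintaining the invariant''), but you give no mechanism for the absorption step, no argument for why adding $D$ more color classes buys exactly one more unit of tolerance, and no reason why a single fixed rainbow choice can be repaired to survive all $\binom{n}{t}$ removals simultaneously rather than each one separately. The difficulty is real: the classical colorful Carath\'eodory theorem gives no control over which selection is made, so naive induction does not obviously close. Without a proof of this lemma the argument is a reduction, not a proof, and the reduction is precisely the easy half of the \Soberon{}--Strausz paper. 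To make this complete you would need to supply the combinatorial core, e.g.\ the minimization/exchange argument that \Soberon{} and Strausz use to establish their tolerant selection lemma.

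Minor remark: the statement as printed in the paper reads ``$P\subset\real$'' where it clearly means ``$P\subset\real^d$''; your proof correctly works in $\real^d$.
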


\Soberon{} and Strausz~\cite{Soberon2012} also
conjectured this bound to be tight. A lower bound was recently
proven by \Soberon{}~\cite{Soberon2014}: at least
$m(\lfloor d/2\rfloor + t + 1)$ points are necessary to guarantee the
existence of a $t$-tolerant Tverberg $m$-partition.

So far,
no exact or approximation algorithms for tolerant Tverberg partitions appear in
the literature.

\paragraph{Our contribution.}
\newenvironment{prfTODO}{\begin{proof}[TODO]}{\qed\end{proof}}

In Section~\ref{sec:ld}, we consider the problem of computing
tolerant Tverberg partitions in low dimensions. We present an
algorithm for the one-dimensional case and use a dimension-reduction
argument to extend the algorithm to multidimensional input:
\begin{theorem}\label{thm:dr_tverberg}
  Given a set $P\subset \real^{d}$ of size $2^{d-1}(m(t+2) - 1)$,
  a $t$-tolerant Tverberg $m$-partition
  for $P$ can be computed in time $\bo{2^{d-1}dmt + mt \log t}$.
\end{theorem}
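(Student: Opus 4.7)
The plan is to treat the one-dimensional case by a direct rank-interleaving, and to reduce higher dimensions to the $(d-1)$-dimensional case by pairing points across a bisecting hyperplane.

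For $d = 1$, given $n = m(t+2) - 1$ points sorted as $p_1 \le \cdots \le p_n$, I assign $p_k$ to the group $T_j$ with $j = ((k-1) \bmod m) + 1$, so that $T_1, \ldots, T_{m-1}$ each receive $t+2$ points and $T_m$ receives $t+1$, namely those at ranks $j, m+j, 2m+j, \ldots$. To prove $t$-tolerance I use the one-dimensional characterization that $\bigcap_j \conv(T_j \setminus R) \neq \emptyset$ exactly when, for every ordered pair of groups $T_i, T_j$ (possibly equal) and every split $t_1, t_2 \ge 0$ with $t_1 + t_2 \le t$, the $(t_1 + 1)$-th smallest element of $T_i$ lies weakly to the left of the $(t_2 + 1)$-th largest element of $T_j$. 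For the interleaved partition these inequalities reduce to $p_{t_1 m + i} \le p_{(t+1-t_2) m + j}$, which follow from $i, j \le m$ and $t_1 + t_2 \le t$, with only a minor special case for the shorter group $T_m$.

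For $d \ge 2$, I pick any hyperplane $H$ that bisects $P$, for instance the axis-aligned hyperplane through the coordinate median (found in $O(|P|)$ time via linear selection), and pair the $|P|/2$ points of $H^+$ arbitrarily with those of $H^-$. For each pair $\{p, q\}$ I take as representative the intersection $r$ of the segment from $p$ to $q$ with $H$. The $|P|/2 = 2^{d-2}(m(t+2) - 1)$ representatives lie in $H \cong \real^{d-1}$, so by induction they admit a $t$-tolerant Tverberg $m$-partition $(R_1, \ldots, R_m)$; the final partition of $P$ places both endpoints of each pair into the group of that pair's representative.

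The crux is the lifting lemma: if the adversary removes a set $R \subseteq P$ with $|R| \le t$, then at most $|R|$ pairs lose an endpoint, so at most $t$ representatives are \emph{affected}; call this set $R^\ast$. Invoking the inductive tolerance after removing $R^\ast$ yields a common point $r^\ast \in \bigcap_j \conv(R_j \setminus R^\ast)$, which I can write as a convex combination of intact representatives $r_k$, each in turn of the form $\beta_k p_k + (1-\beta_k) q_k$ with both $p_k, q_k \in T_j \setminus R$; substitution gives $r^\ast \in \conv(T_j \setminus R)$ for every $j$. The main obstacle will be carrying out this lifting cleanly while tracking which representatives remain intact and handling the smaller group $T_m$ in the 1D base; the runtime bound then follows from the recurrence $T(d) = T(d-1) + O(d \cdot 2^{d-1} mt)$ combined with a selection-based 1D procedure that produces the rank-mod-$m$ partition in $O(mt \log t)$.
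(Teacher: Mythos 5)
Your proof is correct and follows essentially the same strategy as the paper: build the rank-interleaved partition in one dimension, then recurse on dimension by pairing points across a bisecting hyperplane and replacing each representative by its pair. Your 1D construction is the paper's partition up to relabeling (your shorter group $T_m$ is the paper's $T_1$), and your lifting lemma is exactly the argument the paper gestures at but leaves implicit (at most $|R|$ representatives are affected, so the inductive tolerance applies). The only real difference is in verifying 1D tolerance: you directly check the pairwise order-statistic inequality $p_{t_1 m + i} \le p_{(t+1-t_2)m + j}$ over all admissible $(t_1,t_2)$ splits and then appeal to Helly implicitly via pairwise intersection, whereas the paper (Lemma~\ref{ld:lem:1d}) argues by contradiction, fixing a separating point $h$ and counting how many points of $T_1$ any separating removal set must contain, also finishing with Helly. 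Both verifications amount to the same counting, so this is a cosmetic rather than substantive difference.
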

For $d=1$, the bound on the number of points is tight and improves
the \Soberon{}-Strausz bound from Theorem~\ref{thm:sobstrau} by
$t(m-2)$. For $d=2$, the new bound improves the bound
by \Soberon{} and Strausz for large enough $m$ and $t$.

For higher dimensions, we describe in Section~\ref{sec:redregtver} an
approximation-preserving reduction to the regular Tverberg problem based on a
lemma by \GarciaColin{}. Thus, we
can apply existing and possible future algorithms for the regular Tverberg
problem in the tolerant setting:

\begin{proposition} \label{apr:prop:cb}
  Let $P\subset \real^d$ and let $\mc{A}$ be an
  algorithm that computes a regular Tverberg $m$-partition
  for any point set of size $n_\mc{A}(m)$ in time $T_{\mc{A}}(m)$.
  Then, a $\left(\lfloor |P|/n_{\mc{A}}(m) \rfloor -1\right)$-tolerant Tverberg
  $m$-partition for $P$ can be computed in time
  $\mc{O}\left(T_{\mc{A}}(m) \cdot |P| / n_{\mc{A}}(m) \right)$.
\end{proposition}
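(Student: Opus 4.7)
The plan is to split $P$ into as many disjoint blocks of size $n_\mc{A}(m)$ as possible, run $\mc{A}$ on each block independently, and then glue the resulting regular Tverberg $m$-partitions part by part. Concretely, set $k = \lfloor |P|/n_\mc{A}(m)\rfloor$. First I would choose an arbitrary disjoint decomposition $P = P_1 \dot\cup \cdots \dot\cup P_k \dot\cup L$ with $|P_i| = n_\mc{A}(m)$ for every $i$ and $|L| < n_\mc{A}(m)$. Second, I would invoke $\mc{A}$ on each $P_i$ to obtain a regular Tverberg $m$-partition $T_{i,1},\ldots,T_{i,m}$ of $P_i$ together with a witness $c_i \in \bigcap_{j=1}^m \conv(T_{i,j})$. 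Third, I would output the partition $T_j = \bigcup_{i=1}^k T_{i,j}$ for $j < m$, attaching the leftover set $L$ to $T_m$ so that $P = T_1 \dot\cup \cdots \dot\cup T_m$.

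The core of the correctness argument is a pigeonhole observation which is, in spirit, the content of \GarciaColin's lemma. For any $R \subset P$ with $|R| \leq k-1$, the disjointness of the $P_i$ forces at least one index $i^\ast$ with $R \cap P_{i^\ast} = \emptyset$. For this block, $T_{i^\ast,j} \subseteq T_j \setminus R$ holds for every $j \in \{1,\ldots,m\}$, so the witness $c_{i^\ast}$ still lies in $\bigcap_{j=1}^m \conv(T_j \setminus R)$, which is therefore nonempty. Hence the combined partition is $(k-1)$-tolerant, exactly as required.

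The running-time bound is then immediate. The initial decomposition and the final assembly take $\bo{|P|}$ time, while the $k$ invocations of $\mc{A}$ contribute $k \cdot T_\mc{A}(m) = \bo{T_\mc{A}(m) \cdot |P|/n_\mc{A}(m)}$, which dominates.

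I do not expect a real obstacle here; the two small items that need care are purely bookkeeping. (i) The leftover set $L$ can be absorbed into any single $T_j$ without harm, because the pigeonhole block $P_{i^\ast}$ supplies a witness that does not depend on $L$ at all; removed points of $L$ simply disappear from $T_m$ without affecting $c_{i^\ast}$. (ii) The tolerance is exactly $k-1$ rather than $k$, since an adversary that spends one deletion on each of the $k$ blocks could potentially invalidate every block-level Tverberg partition, which is precisely why the proposition does not claim $k$-tolerance.
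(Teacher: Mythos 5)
Your proof matches the paper's approach exactly: split $P$ into $\lfloor |P|/n_\mc{A}(m)\rfloor$ disjoint blocks, run $\mc{A}$ on each, and merge by taking part-wise unions, with correctness following from the pigeonhole argument — which is precisely Lemma~\ref{apr:lem:cb} specialized to $t_i = 0$, the lemma the paper cites at this step. Your explicit treatment of the leftover set $L$ is a small refinement the paper leaves implicit, and your observation that it may be attached to any single $T_j$ without affecting the witness is correct.
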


Finally, we show in Section~\ref{sec:complexity} that it is
coNP-complete to determine whether a given Tverberg partition has
tolerance $t$ if the dimension is part of the input:
\begin{theorem}\label{thm:complexity}
  \textsc{TestingTolerantTverberg} is coNP-complete.
\end{theorem}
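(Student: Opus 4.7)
The plan is to prove the two standard halves separately. For coNP membership, a short certificate of non-$t$-tolerance is a subset $R\subseteq P$ with $|R|\le t$ witnessing $\bigcap_i \conv(T_i\setminus R)=\emptyset$. Such a certificate has polynomial size, and its correctness is verified in polynomial time via a linear-programming feasibility test for a point lying simultaneously in every $\conv(T_i\setminus R)$.

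For coNP-hardness, I would restrict to $m=2$ and reduce from the minimum-disagreement hyperplane problem (a close relative of Johnson and Preparata's Open Hemisphere problem), which is NP-hard when the dimension is part of the input: given labeled points $X^+,X^-\subseteq\real^d$ and an integer $k$, decide whether some affine hyperplane misclassifies at most $k$ of them. The geometric crux is that, for any Tverberg 2-partition $(T_1,T_2)$, the hulls $\conv(T_1\setminus R)$ and $\conv(T_2\setminus R)$ are disjoint iff some hyperplane strictly separates $T_1\setminus R$ from $T_2\setminus R$; hence the minimum $|R|$ destroying the Tverberg property equals the minimum misclassification count of any hyperplane under the labeling $T_1\mapsto +,\ T_2\mapsto -$, and $(T_1,T_2)$ is $t$-tolerant iff every hyperplane misclassifies strictly more than $t$ points.

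The one obstacle is that \textsc{TestingTolerantTverberg} expects a valid Tverberg 2-partition, whereas a general min-disagreement instance may be linearly separable. I plan to handle this by augmenting the instance with a small constant-size gadget of ``tie-breaking'' dummy points (a generically placed tight cluster split between the two classes) designed so that (i) every hyperplane is forced to misclassify at least one dummy, guaranteeing $\conv(T_1)\cap\conv(T_2)\ne\emptyset$ via the separation theorem, and (ii) the minimum misclassification count shifts by a fixed constant computable from the gadget. Setting the tolerance threshold accordingly then reduces the NP-hard decision about $(X^+,X^-)$ to a single $t$-tolerance query on the augmented partition, and coNP-completeness follows by combining this with the upper bound. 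The most delicate step will be pinning down the exact shift induced by the gadget: since optimal min-disagreement hyperplanes are determined by a finite family of subsets of $X^+\cup X^-$, a genericity argument should let us place the gadget so that no optimal hyperplane can ``cheat'' by exploiting the dummies, but verifying this carefully is where the technical work lies.
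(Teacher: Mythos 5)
Your coNP-membership argument is the same as the paper's (a witness set $R$ certified via an LP feasibility test). For hardness, however, you take a genuinely different route from the paper, and there are two things worth flagging.

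First, the obstacle you identify --- that the input ``expects a valid Tverberg 2-partition'' --- is not actually there, and the gadget you plan to build around it is unnecessary. The problem \textsc{TestingTolerantTverberg} is defined on an arbitrary partition $\mc{T}$; if $\mc{T}$ fails even to be a (0-tolerant) Tverberg partition, the answer is simply NO. Trace your own equivalence through: with $T_1 = X^+$, $T_2 = X^-$, and $t = k$, the partition is $t$-tolerant iff the minimum number of points misclassified by any affine hyperplane is strictly greater than $k$. If the original instance is perfectly separable, this minimum is $0 \le k$, so the answer is correctly NO, exactly matching ``min-disagreement $\le k$.'' No dummy points, no shift to compute, no genericity argument. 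You were about to spend your technical budget on a non-problem.

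Second, the residual risk in your plan is the exact NP-hardness statement you lean on. Your characterization of $t$-tolerance of a 2-partition in terms of the minimum misclassification count is correct (disjoint compact convex sets admit a strictly separating hyperplane, so the minimum $|R|$ destroying the Tverberg property equals $\min_H |\{p : H \text{ misclassifies } p\}|$). But Open Hemisphere in Johnson--Preparata form is a homogeneous maximization problem; you need the affine, strict, minimum-disagreement variant with the dimension as part of the input. That variant is NP-hard (MAX FS / agnostic halfspace learning), but you should cite it precisely and check the strict-vs.-nonstrict convention matches, since a point lying on the hyperplane is misclassified under your semantics.

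The paper sidesteps all of this by reducing from Teng's \textsc{TestingCenter}, which is already known to be coNP-complete. Given $(P,c)$ in $\real^d$, embed in $\real^{d+1}$, set $t = \lceil |P|/(d+1)\rceil - 1$, and place $T = T^- \cup T^+$ as $t+1$ points on each side of the embedding hyperplane along the line through $c$ orthogonal to it. Then $\conv(P)\cap\conv(T) = \{c\}$, and by the depth lemma (Lemma~\ref{com:lem:depth}), $\{P,T\}$ is $t$-tolerant iff $c$ has depth $t+1$ iff $c$ is a centerpoint. This is shorter because Teng has already absorbed the hyperplane-counting hardness, and the equivalence to tolerance is a one-line consequence of the depth characterization. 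Your route effectively re-derives a piece of Teng's argument; it can be made to work, but once the unneeded gadget is dropped, it buys you a self-contained hardness proof at the cost of having to nail down the MAX FS variant. Both are correct; the paper's is the more economical.
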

This holds even if we restrict the input to Tverberg partitions of size 2.

\section{Low Dimensions} \label{sec:ld}
We start with an algorithm for the one-dimensional case that yields a
tight bound. This can be bootstrapped to higher dimensions with a lifting approach
similar to the algorithm by Mulzer and Werner~\cite{Mulzer2013}. In two dimensions, we also get
an improved bound if the size of the desired partition and the
tolerance is large enough.

\subsection{One Dimension}
Let $P\subset \real$ with $|P| = n$, and let
$\mc{T}=\{T_1,T_2,\ldots,T_m\}$ be a $t$-tolerant Tverberg $m$-partition of $P$. By
definition, there is no
subset $R\subset P, |R| = t$ whose removal separates the convex hulls of the
sets in $\mc{T}$. Bounding the size of the sets in \mc{T} gives us more insight
into the structure.

\begin{lemma}\label{lem:1d_sizes}
  Let $P\subset\real$ with $|P| = n$ and let $\mc{T}=\{T_1,T_2,\ldots,T_m\}$ be
  a $t$-tolerant Tverberg $m$-partition of $P$. Then,
  \begin{enumerate}
    \renewcommand{\labelenumi}{(\roman{enumi})}
    \item for $i = 1, \ldots,m$, we have $|T_i| \geq t+1$; and
    \item for $i,j = 1, \ldots,m$, $i\neq j$, we have $|T_i \cup T_j| \geq 2t+3$.
  \end{enumerate}
\end{lemma}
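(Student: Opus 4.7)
The plan is to prove both inequalities by contradiction, in each case exhibiting an explicit set $R \subseteq P$ of size at most $t$ whose removal destroys the Tverberg property, thereby contradicting $t$-tolerance of $\mc{T}$.

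For part (i), I would assume $|T_i| \leq t$ for some index $i$ and simply take $R := T_i$. Then $|R| \leq t$ and $T_i \setminus R = \emptyset$, so $\conv(T_i \setminus R) = \emptyset$ and the common intersection $\bigcap_{k=1}^m \conv(T_k \setminus R)$ is empty. This contradicts the hypothesis that $\mc{T}$ is a $t$-tolerant Tverberg partition. The only subtlety is the definitional convention $\conv(\emptyset) = \emptyset$, after which the step is immediate.

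For part (ii), I would assume $|T_i \cup T_j| \leq 2t+2$ for some pair $i \neq j$ and aim for a contradiction of the same type. First, invoke part (i) to get $|T_i|, |T_j| \geq t+1$, which combined with $|T_i \cup T_j| = |T_i| + |T_j| \leq 2t+2$ forces $|T_i| = |T_j| = t+1$. The one-dimensional geometry now makes the construction easy: $\conv(T_i \setminus R)$ and $\conv(T_j \setminus R)$ are closed intervals on $\real$, so they fail to intersect as soon as one set lies strictly to the left of the other. To arrange this, let $p$ be the leftmost point of $T_i \cup T_j$; without loss of generality $p \in T_i$. Set $R := T_i \setminus \{p\}$, so that $|R| = t$. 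After removal, $T_i$ collapses to the singleton $\{p\}$, while all $t+1$ points of $T_j$ survive and lie strictly to the right of $p$. Hence $\conv(T_i \setminus R) \cap \conv(T_j \setminus R) = \emptyset$, again contradicting $t$-tolerance.

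I do not anticipate any substantial obstacle. The main idea---focus on the extremal point of $T_i \cup T_j$ so that one entire set can be collapsed to a single point---is tailor-made for the one-dimensional setting, and part (i) reduces part (ii) to the tight regime $|T_i| = |T_j| = t+1$, which the leftmost-point trick handles in a single stroke. No case analysis beyond the symmetric choice of which set contains the leftmost point is required.
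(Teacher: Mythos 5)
Your proof is correct and follows the paper's argument essentially verbatim: part (i) removes $T_i$ itself, and part (ii) uses part (i) to force $|T_i|=|T_j|=t+1$, then removes $T_i$ minus the leftmost point of $T_i\cup T_j$ to separate the two convex hulls. The only difference is that you spell out the final separation step in slightly more detail than the paper, which simply appeals to a figure.
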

\begin{proof}
  \begin{enumerate}
    \renewcommand{\labelenumi}{(\roman{enumi})}
    \item Suppose $|T_i| \leq t$. After removing $T_i$ from $P$, the
          intersection of the convex hulls of the sets in $\mc{T}$ becomes empty,
          and $\mc{T}$ would not be $t$-tolerant.
    \item Suppose there are  $T_i, T_j\in\mc{T}$ with $|T_i \cup
    T_j| \leq 2t+2$. By $(i)$, we have $|T_i|=|T_j|=t+1$.
    Let $p_{\min} = \min(T_i\cup T_j)$ and assume w.l.o.g. that $p_{\min} \in T_i$
    (see Figure~\ref{fig:1d_sizes}).
    Then $|T_i \setminus \{p_{\min}\}| = t$, and removing the set $T_i
    \setminus \{p_{\min}\}$ separates the convex hulls of $T_i$ and $T_j$.
    This again contradicts  $\mc{T}$ being $t$-tolerant.
  \end{enumerate}
\end{proof}

\begin{figure}[htbp]
  \begin{center}
    \includegraphics{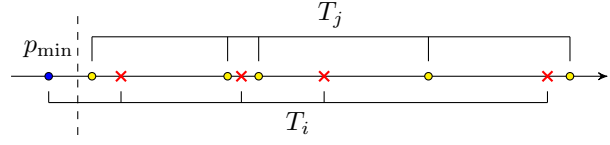}
  \end{center}
  \caption{The convex hulls of two sets of size $t+1$ can be separated by
    removing $t$ points.}
  \label{fig:1d_sizes}
\end{figure}

Lemma~\ref{lem:1d_sizes} immediately implies a lower bound on the size of any point set
that admits a $t$-tolerant Tverberg $m$-partition.

\begin{corollary}
  \label{ld:cor:lb}
  Let $P\subset \real$ with $|P| < m(t+2)-1$. Then $P$ has no $t$-tolerant
  Tverberg $m$-partition.
\end{corollary}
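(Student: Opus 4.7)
The plan is to prove the contrapositive: assuming a $t$-tolerant Tverberg $m$-partition $\mathcal{T} = \{T_1, \dots, T_m\}$ of $P$ exists, we will show $|P| \ge m(t+2) - 1$.

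First, I would invoke Lemma~\ref{lem:1d_sizes}(i) to conclude that every block $T_i$ contains at least $t+1$ points. This alone only yields the weaker bound $|P| \ge m(t+1)$, so the real work is squeezing out the extra $m-1$ points using Lemma~\ref{lem:1d_sizes}(ii).

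The key observation is that at most one block of $\mathcal{T}$ can have size exactly $t+1$. Indeed, if two distinct blocks $T_i$ and $T_j$ both had size $t+1$, then since the blocks of a partition are disjoint we would have $|T_i \cup T_j| = 2t+2 < 2t+3$, contradicting Lemma~\ref{lem:1d_sizes}(ii). Hence at least $m-1$ of the blocks have size at least $t+2$, while the remaining block still has size at least $t+1$ by part (i).

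Summing the sizes of the blocks of the partition then gives
\[
  |P| \;=\; \sum_{i=1}^{m} |T_i| \;\ge\; (m-1)(t+2) + (t+1) \;=\; m(t+2) - 1,
\]
which is exactly the bound we need. Since this lower bound is forced whenever a $t$-tolerant Tverberg $m$-partition exists, any $P \subset \mathbb{R}$ with $|P| < m(t+2) - 1$ admits no such partition. I do not anticipate any real obstacle; the whole argument is a one-line pigeonhole application of Lemma~\ref{lem:1d_sizes}, and the only subtlety is noticing that part (ii), combined with disjointness of the blocks, rules out having two minimum-sized blocks simultaneously.
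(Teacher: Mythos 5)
Your argument is exactly the intended deduction from Lemma~\ref{lem:1d_sizes}, which the paper leaves implicit ("Lemma~\ref{lem:1d_sizes} immediately implies\dots"). The pigeonhole step—at most one block of size $t+1$ by part (ii), the rest of size at least $t+2$—is correct and matches the paper's approach.
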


Now what happens for $|P| = m(t+2)-1$? Note that for $t > 0$ and $m > 2$, we
have $m(t+2) -1 < 2(t+1)(m-1) + 1$, the bound by \Soberon{} and Strausz.
Thus, proving that a $t$-tolerant Tverberg $m$-partition exists for any
one-dimensional point set of size $m(t+2)-1$ would disprove the conjecture by
\Soberon{} and Strausz.

Let $P\subset \real$ be of size $m(t+2)-1$. By Lemma~\ref{lem:1d_sizes}, in any
$t$-tolerant Tverberg partition of $P$, one set has to be of size $t+1$ and all
other sets have to be of size $t+2$. Let $\mc{T}=\{T_1,\ldots,T_m\}$ be
a Tverberg $m$-partition of $P$ such that $T_1$ contains every $m$th
point of $P$ and each other set $T_i$ ($i\geq 2$) has one point in each interval
defined by the points of $T_1$; see Fig.~\ref{ld:fig:ttp} for $m =3$ and $t=2$.
Note that $|T_1| = t+1$ and $|T_i| = t+2$ for $i\geq 2$.
We will show that $\mc{T}$ is $t$-tolerant.
Intuitively, $\mc{T}$
maximizes the interleaving of the sets, making the convex hulls more
robust to removals.

\begin{figure}[htbp]
  \begin{center}
    \includegraphics[width=0.7\textwidth]{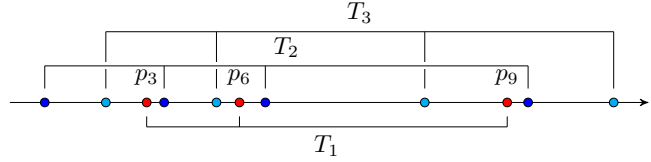}
  \end{center}
  \caption{A $2$-tolerant Tverberg $3$-partition for $11$ ($=3(2+2)-1$)
    points.}
  \label{ld:fig:ttp}
\end{figure}

\begin{lemma}
  \label{ld:lem:1d}
  Let $P\subset \real$ with $|P| = m(t+2)-1$, and let
  $\mc{T}=\{T_1,\ldots,T_m\}$ be an $m$-partition of $P$.
  Suppose that $|T_1| = t+1$, and write $T_1 = (p_1, p_2, \dots, p_{t+1})$, sorted
  from left to right. Suppose that each interval $\mc{I}\in \{(-\infty, p_{1})$,
  $(p_{1},p_{2})$, $\ldots$,$(p_{t+1},\infty)\}$ contains one point
  from each $T_i$, for $i = 2, \dots, m$. Then $\mc{T}$
  is a $t$-tolerant Tverberg $m$-partition for $P$.
\end{lemma}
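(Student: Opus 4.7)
I need to show that for every $R\subseteq P$ with $|R|\leq t$, the convex hulls $\conv(T_i\setminus R)$ have a common point. Since each $\conv(T_i\setminus R)$ is an interval of $\real$, Helly's theorem on the line reduces the task to showing that any two of these intervals intersect. The strategy is then to assume for contradiction that some pair is disjoint and count the deletions this forces; in each case the count will exceed $t$. Since the intervals $I_0=(-\infty,p_1),\ I_1=(p_1,p_2),\ \dots,\ I_{t+1}=(p_{t+1},\infty)$ contain exactly one point $x_{i,k}$ of each $T_i$ ($i\geq 2$), these counts will be easy to carry out.

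\textbf{Pair $(T_1,T_j)$ with $j\geq 2$.} Suppose w.l.o.g.\ that $\conv(T_j\setminus R)$ lies strictly to the left of $\conv(T_1\setminus R)$, and let $p_a$ be the leftmost surviving point of $T_1$. On the one hand, the points $p_1,\dots,p_{a-1}$ must all be in $R$, contributing $a-1$ deletions. On the other hand, every point $x_{j,k}$ of $T_j$ with $k\geq a$ satisfies $x_{j,k}>p_a$, and so must also belong to $R$; this contributes $t+2-a$ further deletions. Summing gives $|R|\geq t+1$, a contradiction.

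\textbf{Pair $(T_i,T_j)$ with $i,j\geq 2$, $i\neq j$.} Suppose w.l.o.g.\ $R_i<L_j$ where $R_i=\max(T_i\setminus R)=x_{i,k_1}$ and $L_j=\min(T_j\setminus R)=x_{j,k_2}$. Since $x_{i,k}\in I_k$ and the intervals $I_0<I_1<\dots<I_{t+1}$ are linearly ordered, the inequality $x_{i,k_1}<x_{j,k_2}$ forces $k_1\leq k_2$. Rightmost/leftmost then imply that every $x_{i,k}$ with $k>k_1$ and every $x_{j,k}$ with $k<k_2$ belongs to $R$, giving $|R|\geq (t+1-k_1)+k_2$. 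If $k_1<k_2$ this is already $\geq t+2$; if $k_1=k_2=k$ this is exactly $t+1$, again a contradiction since the two deletion sets are disjoint subsets of $T_i$ and $T_j$.

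The one spot that requires a little care is the case $k_1=k_2$ in the second pair, where both extrema sit inside the same interval $I_k$; but since the counts refer to disjoint subsets of $T_i$ and $T_j$, the bound $t+1$ is still valid. Everything else is routine bookkeeping based on the interleaving structure of $\mc{T}$.
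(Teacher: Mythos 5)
Your proof is correct and takes essentially the same approach as the paper: reduce to pairwise intersection via Helly on the line, then show that any deletion set separating two parts must have size at least $t+1$ by counting against the interleaving structure. The paper runs a single unified count (using $T_1$ and the separating point to define $a,b$ and measuring every part against $T_1$), whereas you split into the two cases $T_1$ vs.\ $T_j$ and $T_i$ vs.\ $T_j$ with $i,j\geq 2$; the underlying counting argument is the same.
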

\begin{proof}
  Suppose there exist $T_i, T_j \in \mc{T}$, $i \neq j$, and a subset $R\subset
  P$ of size $t$ such that removing $R$  from $P$ separates the convex hulls of
  $T_i$ and $T_j$.  Let $h$ be a point that separates $\conv(T_i \setminus R)$
  and $\conv(T_j \setminus R)$.  Define $a = \max\{p \in T_1 \mid p\leq h\}$ and
  $b= \min\{p \in T_1 \mid p > h\}$, where $a = -\infty$ if all points in $T_1$
  are greater than $h$ and $b=+\infty$ if all points in $T_1$ are less than $h$.
  Let $T_i^{\leq a} = \{ p \in T_i \mid p \leq a\}$ and $T_i^{\geq b} = \{p \in
  T_i \mid p \geq  b\}$, and define $T_j^{\leq a}$, $T_j^{\geq b}$ similarly.
  Since removing $R$ separates the convex hulls of $T_i$ and $T_j$ at $h$, $R$
  must contain either $T_i^{\leq a} \cup T_j^{\geq b}$ or $T_i^{\geq b} \cup
  T_j^{\leq a}$. Figure~\ref{ld:fig:lem1dproof} shows the situation. By
  construction, we know that $|T^{\leq a}_i| = |T^{\leq a}_j|
  = |T^{\leq a}_1|$ and $|T^{\geq b}_i| = |T^{\geq b}_j| = |T^{\geq b}_1|$. We
  thus have $t \geq |R| \geq |T_i^{\leq a} \cup T_j^{\geq b}| = |T_j^{\leq a}
  \cup T_i^{\geq b}| = |T_1^{\leq a} \cup T_1^{\geq b}|$. However, since
  $|T_1^{\leq a} \cup T^{\geq b}_1| = |T_1| = t+1$, this is a contradiction.

  Thus, even after removing $t$ points, the convex hulls of the sets in $\mc{T}$
  intersect pairwise. Helly's theorem~\cite{Matouvsek2002} now
  guarantees that the convex hulls of all sets in $\mc{T}$ have a common
  intersection point. Hence, $\mc{T}$ is $t$-tolerant.
\end{proof}

\begin{figure}[htbp]
  \begin{center}
    \includegraphics[width=0.6\textwidth]{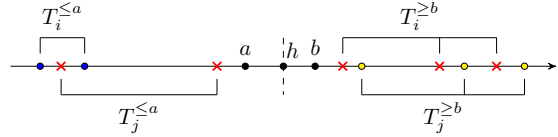}
  \end{center}
  \caption{The convex hulls of two elements in \mc{T} are separated after the
    removal of $R$. Crosses mark removed points (i.e., points in $R$).
    Points not used in the proof of Lemma~\ref{ld:lem:1d} are left out for
    clarification.}
  \label{ld:fig:lem1dproof}
\end{figure}

Lemma~\ref{ld:lem:1d} immediately gives a way to compute a $t$-tolerant Tverberg
$m$-partition in $\bo{mt \log mt}$ time for $|P|=m(t+2)-1$ by sorting
$P$. However, it is not necessary to know the order of all of $P$.
Algorithm~\ref{ld:alg:1d} exploits this fact to improve the running time. It
repeatedly partitions the point set until it has selected all points whose
ranks are multiples of $m$. These points form the set $T_1$.
Initially, the set $Q$
contains only the input $P$ (line \ref{ld:alg:1d:qinit}). In lines
\ref{ld:alg:1d:splitstart}--\ref{ld:alg:1d:splitend}, we select from each set in
$Q$ an element whose rank is a multiple of $m$ (line \ref{ld:alg:1d:select}) and
we split the set at this element. Here, $\mathtt{select}(P,k)$ is a procedure that
returns the element with rank $k$ of $P$. After termination of both loops
in lines \ref{ld:alg:1d:outerstart}--\ref{ld:alg:1d:outerend}, all remaining sets
in $Q$ correspond to points in $P$ between two consecutive points in $T_1$. In
lines \ref{ld:alg:1d:fstart}--\ref{ld:alg:1d:fend}, the points in the sets in
$Q$ are distributed equally among the elements $T_i$ ($i\geq 2$) of the returned
partition.

\begin{alg}
  \Input{$P\subset \real$, size of partition $m$}
  \SetKwFunction{Select}{select}
  $r \leftarrow m$\;\label{ld:alg:1d:rstart}
  \While{$r \leq |P|/2$}{
    $r \leftarrow 2 \cdot r$\;\label{ld:alg:1d:se}\label{ld:alg:1d:rend}
  }
  $Q \leftarrow \{ P \}$;\label{ld:alg:1d:qinit}
  $T_1,T_2,\ldots,T_m \leftarrow \emptyset, \emptyset,\ldots, \emptyset$\;
  \While{$r \geq m$}{
    \label{ld:alg:1d:outerstart}
    \ForEach{$P' \in Q \text{ with } |P'| \geq r$}{\label{ld:alg:1d:splitstart}
        remove $P'$ from $Q$\;
        $p_r \leftarrow \Select(P', r)$\;\label{ld:alg:1d:select}
        $Q \leftarrow Q \cup \{ \{ p' \in P' \mid p' < p_r \}, \{ p' \in P' \mid p' >
          p_r \}\}$\;\label{ld:alg:1d:qadd}
        $T_1 \leftarrow T_1 \cup \{ p_r \}$\;\label{ld:alg:1d:splitend}
    }
	$r \leftarrow r/2$\;\label{ld:alg:1d:r}
    \label{ld:alg:1d:outerend}
  }
    \ForEach{$P' \in Q$}{\label{ld:alg:1d:fstart}
  \ForEach{$j \in \{2,3,\ldots,m\}$}{
      remove any point from $P'$ and add it to $T_j$\; \label{ld:alg:1d:fend}
      }
    }
  \Return{$\{T_1,T_2,\ldots,T_m\}$}\;
  \caption{1d-tolerant-Tverberg}
  \label{ld:alg:1d}
\end{alg}

\begin{theorem}
  Let $P\subset\real$ be a set of size $m(t+2)-1$. On input $(P,m)$,
  Algorithm~\ref{ld:alg:1d} returns a $t$-tolerant Tverberg partition for $P$ in
  time $\bo{mt \log t}$.
  \label{ld:thm:bound}
\end{theorem}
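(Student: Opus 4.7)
My plan is to separate correctness (which reduces to Lemma~\ref{ld:lem:1d}) from the running-time analysis (where the savings over a naive $\bo{mt \log mt}$ sorting bound come from a level-by-level charging argument).

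For correctness, I would prove by induction on iterations of the outer \textbf{while} loop (lines \ref{ld:alg:1d:outerstart}--\ref{ld:alg:1d:outerend}) the following invariant: at the start of the iteration with parameter $r$, the set $T_1$ contains exactly those points of $P$ whose ranks are multiples of $m$ lying strictly between two boundary ranks at distance $2r$ apart, and every $P'\in Q$ consists of $|P'|$ \emph{consecutive} points of $P$ whose ranks lie strictly between two consecutive already-selected ranks (or between a selected rank and an endpoint of $P$), with $|P'|\le 2r-1$. The initial doubling loop (lines \ref{ld:alg:1d:rstart}--\ref{ld:alg:1d:rend}) sets $r$ to the largest value of the form $m\cdot 2^i$ that is $\le|P|$ (equivalently, the smallest exceeding $|P|/2$), so the invariant holds vacuously at the start. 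Inside one iteration, for each $P'\in Q$ with $|P'|\ge r$, the call $\mathtt{select}(P',r)$ returns the element whose rank in $P'$ is $r$; by the invariant, this rank in $P$ is a multiple of $m$ sitting at the midpoint of the interval of available ranks, so $T_1$ grows correctly, and the two new pieces inherit the invariant with the halved parameter $r/2$. When the loop terminates ($r<m$), $T_1$ contains exactly the points at ranks $m,2m,\dots,(t+1)m$, and $Q$ consists of the $t+2$ blocks of $m-1$ consecutive points strictly between them. The final \textbf{foreach} (lines \ref{ld:alg:1d:fstart}--\ref{ld:alg:1d:fend}) then puts exactly one point of each block into each of $T_2,\dots,T_m$, so the hypothesis of Lemma~\ref{ld:lem:1d} is satisfied and $\mathcal{T}$ is $t$-tolerant.

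For the running time I would argue as follows. The doubling loop performs $\bo{\log t}$ iterations and so costs $\bo{\log t}$. The main loop executes $\bo{\log t}$ iterations because $r$ starts at $\Theta(mt)$ and halves until it drops below $m$. The key step is bounding the work per iteration: in a single iteration the processed sets $P'\in Q$ are pairwise disjoint subsets of $P$, so their total size is at most $|P|=m(t+2)-1=\bo{mt}$. Using any linear-time selection routine for $\mathtt{select}$, each call costs $\bo{|P'|}$, and the splitting steps (lines \ref{ld:alg:1d:qadd}--\ref{ld:alg:1d:splitend}) also run in time $\bo{|P'|}$ per set. Summing over all $P'$ processed in one iteration yields $\bo{mt}$ work per level, hence $\bo{mt\log t}$ across all levels. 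The final distribution loop touches each of the $(m-1)(t+2)=\bo{mt}$ remaining points once. Adding these contributions gives the claimed $\bo{mt\log t}$ bound.

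The main obstacle is the correctness invariant: one must track both the exact rank (in the original $P$) of each surviving piece's boundary and the fact that the recursion splits pieces evenly in terms of multiples-of-$m$, not just in terms of raw size. The initial choice of $r$ in lines \ref{ld:alg:1d:rstart}--\ref{ld:alg:1d:rend} is precisely the one that makes the very first selection land on a multiple of $m$ that bisects $P$ into pieces consistent with the invariant; a cleaner way to state and verify this is to note that after the doubling loop $r$ is the unique power-of-two multiple of $m$ in $(|P|/2,|P|]$, which is exactly the rank of the median multiple of $m$ among $\{m,2m,\dots,(t+1)m\}$. Once this base case is established, the induction step is essentially routine.
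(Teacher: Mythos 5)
Your proof takes essentially the same route as the paper: the same invariant (every $P'\in Q$ has size less than $r$ at the start of the iteration with parameter $r$, equivalently $|P'|\le 2r-1$), the same deduction that the selected ranks are exactly the multiples of $m$ in $P$ so that Lemma~\ref{ld:lem:1d} applies, and the same level-by-level charging ($\bo{\log t}$ passes, $\bo{mt}$ work each) for the running time, giving $\bo{mt\log t}$ overall. The one thing worth flagging is the closing remark that the initial $r$ ``is exactly the rank of the median multiple of $m$ among $\{m,2m,\ldots,(t+1)m\}$'': this is false in general. For $m=3$, $t=4$ we have $|P|=17$, the doubling loop stops at $r=12$, but the median of $\{3,6,9,12,15\}$ is $9$. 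Likewise, the selected element at rank $r$ is not literally ``the midpoint of the interval of available ranks'' unless $|P'|=2r-1$. Neither slip breaks the argument, because the only properties you actually use are (a) the selected rank is a multiple of $m$ in $P$, which is preserved since every split point is a multiple of $m$ and every piece starts at a rank $\equiv 1 \pmod m$, and (b) the two new pieces have size at most $r-1$, which follows from $|P'|\le 2r-1$. Those are exactly what the paper proves; I would simply drop the median and midpoint phrasing rather than try to repair it.
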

\begin{proof}
  After each complete run of the inner
  for-loop (lines \ref{ld:alg:1d:splitstart}--\ref{ld:alg:1d:splitend}),
  each element $P'\in Q$ has size strictly less than
  $r$:
  initially, $Q$ contains only $P$ and $r$ is strictly greater than $|P|/2$.
  Hence, both new sets added to $Q$ in line~\ref{ld:alg:1d:qadd}
  are of size strictly less than $r$.
  Since $r$ is halved after each run (line~\ref{ld:alg:1d:r}), the invariant is maintained.

  We will now check that Lemma~\ref{ld:lem:1d} applies. We only split the sets
  in $Q$ at elements whose rank is a multiple of $m$, so the ranks do not change
  modulo $m$. By the invariant, after the termination of the outer while-loop
  in lines \ref{ld:alg:1d:outerstart}--\ref{ld:alg:1d:outerend}, each set in $Q$
  has size strictly less than $m$. Thus, $T_1$ contains the
  points in $P$ whose rank is a multiple of $m$ and
  each set $P'\subset P$ in $Q$ contains all points between two consecutive
  points in $T_1$. Since these are distributed equally among
  $T_2,\ldots,T_m$, Lemma~\ref{ld:lem:1d} now shows the correctness of the
  algorithm.

  Let us consider the running time. Computing the initial $r$ in
  lines~\ref{ld:alg:1d:rstart}--\ref{ld:alg:1d:rend} requires
  $\bo{\log(|P|/m)}=\bo{t}$ time. The split-element in
  line~\ref{ld:alg:1d:select} can be found in time
  $\bo{|P'|}$~\cite{Cormen2009}. Thus, since the sets are disjoint,
  one iteration of the outer while-loop requires $\bo{|P|}$ time, for a total of
  $\bo{\log(|P|/m) |P|}=\bo{\log(t) mt}$. By the same argument, both for-loops
  in lines \ref{ld:alg:1d:fstart}--\ref{ld:alg:1d:fend} require linear time in
  the size of $P$. This results in a total time complexity of $\bo{mt \log t}$,
  as claimed.
\end{proof}

\subsection{Higher Dimensions}
We use a lifting argument~\cite{Mulzer2013} to extend Algorithm~\ref{ld:alg:1d} to
higher-dimensional input.
Given a point set $P\subseteq \real^{d}$ of size $n$, let $h$ be a hyperplane that
splits $P$ evenly (if $n$ is odd, $h$ contains exactly one point of $P$). We
then partition $P$ into $\lfloor n/2 \rfloor$ pairs $(p_i^-,p_i^+)$, where
$p_i^-\in h^-$ and $p_i^+\in h^+$. We obtain a $(d-1)$-dimensional point set
with $\lfloor n/2 \rfloor$ elements by mapping each pair to the intersection of
the connecting line segment with $h$.

Let $q_i = p_i^+p_i^- \cap h$ be the mapped point for $(p_i^-,p_i^+)$ and
$\mathcal{T'}=\{T'_1, \ldots,T'_m\}$ a
$t$-tolerant Tverberg $m$-partition of $Q=\{q_1,\ldots,q_{\lfloor n/2 \rfloor}\}$.
We obtain a Tverberg $m$-partition $\mathcal{T}$ with tolerance $t$ for
$P$ by replacing each $q_i$ in $\mathcal{T}'$ by its corresponding pair
$(p_i^-,p_i^+)$. Thus, we can repeatedly project the set $P$
until Algorithm~\ref{ld:alg:1d} is applicable. Then, we lift the one-dimensional
solution back to higher dimensions.

Algorithm~\ref{ld:alg:dred} follows this approach.
For $d=1$, Algorithm~\ref{ld:alg:1d} is applied
(lines~\ref{ld:alg:dred:basestart}--\ref{ld:alg:dred:baseend}).
Otherwise, we take an appropriate hyperplane orthogonal to the $x_d$-axis and
compute the lower-dimensional point set
(lines~\ref{ld:alg:dred:h}--\ref{ld:alg:dred:pairend}). This is always possible
since we can assume w.l.o.g.
that all points have distinct $x_d$ coordinates by a simple rotation argument.
Finally, the result for $d-1$ dimensions is lifted back to $d$ dimensions
(lines~\ref{ld:alg:dred:liftstart}--\ref{ld:alg:dred:liftend}).
\begin{alg}
  \Input{point set $P\subset\real^{d}$, tolerance parameter $t$,
  size of partition $m$}
  \Output{$t$-tolerant Tverberg partition for P of size $m$}
  \SetKwFunction{solveOned}{1d-tolerant-Tverberg}
  \SetKwFunction{rec}{DimReduct-Tolerant-Tverberg}
  \If{$d=1$}{\label{ld:alg:dred:basestart}
    \Return{\solveOned$(P, m)$}
    \label{ld:alg:dred:baseend}
  }
  $h \leftarrow$ hyperplane that halves $P$ according to the $x_d$-coordinate\;
  \label{ld:alg:dred:h}
  \ForEach{$i\in \{1,2,\ldots,|P\cap h^-|\}$}{
    \label{ld:alg:dred:pairstart}
    $p_i^- \leftarrow$ remove any point from $P$ that belongs to $P\cap h^-$\;
    $p_i^+ \leftarrow$ remove any point from $P$ that belongs to $P\cap h^+$\;
    $q_i \leftarrow$ first $d-1$ coordinates of $p_i^-p_i^+\cap h$\;
    \label{ld:alg:dred:pairend}
  }
  $Q \leftarrow \{q_1,q_2,\ldots,q_{|P\cap h^-|}\}$\;
  $\{T'_1,T'_2,\ldots,T'_m\} \leftarrow \rec(Q, t, m)$\;
  \label{ld:alg:dred:reccall}
  \ForEach{$j \in \{1,2,\ldots,m\}$}{
    \label{ld:alg:dred:liftstart}
    $T_j \leftarrow \{ p_i^-,p_i^+ \mid q_i \in T'_j\}$\;
    \label{ld:alg:dred:liftend}
  }
  \label{ld:alg:dred:return}
  \Return{$\{T_1,T_2,\ldots,T_m\}$}\;
  \caption{DimReduct-Tolerant-Tverberg}
  \label{ld:alg:dred}
\end{alg}
Using Theorem~\ref{ld:thm:bound}, it is easy to show that
Algorithm~\ref{ld:alg:dred} achieves the bounds claimed in Theorem~\ref{thm:dr_tverberg}:
\begin{theorem}[Theorem~\ref{thm:dr_tverberg} restated]
  Given a set $P\subset \real^{d}$ of size $2^{d-1}(m(t+2) - 1)$,
  a $t$-tolerant Tverberg $m$-partition
  for $P$ can be computed in time $\bo{2^{d-1}dmt + mt \log t}$.
\end{theorem}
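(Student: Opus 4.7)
The plan is to prove correctness of Algorithm~\ref{ld:alg:dred} by induction on $d$ and then analyze its running time. The base case $d=1$ follows directly from Theorem~\ref{ld:thm:bound}, since the input size is then exactly $m(t+2)-1$. For the inductive step, by the rotation argument mentioned in the text one may assume all $x_d$-coordinates are distinct, so the halving hyperplane $h$ cleanly splits $P$ into two parts of size $2^{d-2}(m(t+2)-1)$, which matches the input size required by the recursive call at dimension $d-1$.

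The heart of the argument is the following lifting lemma: if $\mc{T}'=\{T_1',\ldots,T_m'\}$ is a $t$-tolerant Tverberg $m$-partition of $Q\subset\real^{d-1}$, then $\mc{T}=\{T_1,\ldots,T_m\}$ with $T_j=\{p_i^-,p_i^+ : q_i\in T_j'\}$ is a $t$-tolerant Tverberg $m$-partition of $P$. To see this, let $R\subseteq P$ with $|R|\le t$, and set $R'=\{q_i : \{p_i^-,p_i^+\}\cap R\neq\emptyset\}$. Since distinct pairs are disjoint, $|R'|\le |R|\le t$, so by tolerance of $\mc{T}'$ there is a point $x\in\bigcap_{j=1}^m \conv(T_j'\setminus R')$. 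For each $j$, writing $x$ as a convex combination of points $q_i\in T_j'\setminus R'$ and then each such $q_i$ as the convex combination $\alpha_i p_i^-+(1-\alpha_i)p_i^+$ (which lies on the segment $p_i^-p_i^+$ by construction) expresses $x$ as a convex combination of endpoints from $\{p_i^-,p_i^+ : q_i\in T_j'\setminus R'\}\subseteq T_j\setminus R$, because $q_i\notin R'$ guarantees that neither $p_i^-$ nor $p_i^+$ lies in $R$. Hence $x\in\bigcap_j\conv(T_j\setminus R)$, which is what we need.

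For the running time, a median of the $x_d$-coordinates and thus $h$ can be found in linear time, and each segment-hyperplane intersection costs $\bo{d}$. So one recursion level at dimension $d$ takes $\bo{dn}=\bo{d\cdot 2^{d-1}mt}$, and the recurrence $T(d)=T(d-1)+\bo{d\cdot 2^{d-1}mt}$ with base case $T(1)=\bo{mt\log t}$ telescopes (the geometric sum $\sum_{i=2}^{d} i\cdot 2^{i-1}$ is $\bo{d\cdot 2^{d-1}}$) to the claimed $\bo{2^{d-1}dmt+mt\log t}$.

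The main obstacle I expect is setting up the lifting lemma cleanly: a removed endpoint $p_i^\pm\in R$ should correspond to the removal of the \emph{entire} projected point $q_i$ in the argument, while leaving enough surviving endpoints in $P$ to realize $x$ as a convex combination of points in $T_j\setminus R$. Defining $R'$ to contain every $q_i$ whose pair meets $R$ achieves both goals at once and keeps $|R'|\le t$ because pairs are disjoint; with this in hand, the induction and the running-time recurrence are routine.
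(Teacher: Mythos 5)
Your proposal is correct and follows the same lifting-based dimension reduction as the paper: recurse to $d=1$, apply the one-dimensional algorithm, lift back through the pairing, and account for the geometric size decrease in the running-time sum. You actually supply more detail than the paper's proof by spelling out why the pair-lifting preserves $t$-tolerance (defining $R'$ from the disjoint pairs so that $|R'|\le|R|$ and every surviving $q_i$ has both endpoints surviving), with only a harmless notational slip in conflating $x\in\real^{d-1}$ with its lift $(x,c)\in h\subset\real^d$.
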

\begin{proof}
  Since the size of $P$ halves in each recursion step, $2^{d-1}(m(t+2) - 1)$ points
  suffice to ensure that Algorithm~\ref{ld:alg:1d} can be applied to $m(t+2)-1$
  points in the base case. Each projection and lifting step can be performed in linear
  time, using a median computation. Since the size of the point set decreases
  geometrically, the total time for projection and lifting is thus $\bo{2^{d-1}dmt}$.
  Since Algorithm~\ref{ld:alg:1d} has running time $O(mt\log t)$, the result follows.
\end{proof}

For $d\geq 3$, the bound from Proposition~\ref{thm:dr_tverberg}
is worse than the \Soberon{}-Strausz bound. However, in two dimensions, we have
\[
  2^{2-1}(m(t+2)-1) <  (2+1)(m-1)(t+1) + 1 \Leftrightarrow m/(m-3) <   t
\]

This holds for instance if $ m\geq 4 \wedge t\geq 5 $ or $m\geq 7 \wedge t\geq 2$.
Thus, Algorithm~\ref{ld:alg:dred} gives a strict improvement over
the \Soberon{}-Strausz bound for large enough $m$ and $t$.

\section{Reduction to the Regular Tverberg Problem}\label{sec:redregtver}

We now show how to use any algorithm that computes approximate
regular Tverberg partitions in order to find tolerant Tverberg partitions.
For this, we must increase the tolerance of a Tverberg partition.
In the following, we show that one can merge elements of several
Tverberg partitions for disjoint subsets of $P$
to obtain a Tverberg partition with higher tolerance for the whole set $P$.
The following lemma is also implicit in the Ph.D. thesis of
\GarciaColin{}~\cite{GarciaColin2007}.

\begin{lemma}
  \label{apr:lem:cb}
  Let $\mc{T}_1,\ldots,\mc{T}_k$ be Tverberg $m$-partitions
  for disjoint point sets $P_1$, $\ldots$,$ P_k\subset\real^d$.
  Let $T_{i,j}$ be the $j$th element of $\mc{T}_i$ and $t_i \geq 0$
  the tolerance of $\mc{T}_i$.
  Then,
    $\mc{T} = \{ T_j = \textstyle\bigcup_{i=1}^k T_{i,j}  \mid j \in
    \{1,2,\ldots,m\}\}$
  is a Tverberg $m$-partition of $P = \bigcup_{j=1}^{k} P_i$ with tolerance
  $t = \sum_{i=1}^k t_i +k -1$.
\end{lemma}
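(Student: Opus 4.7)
The plan is to prove the lemma by a pigeonhole argument on how the adversarial removal set $R$ can be distributed among the ground sets $P_1, \ldots, P_k$. First I would check that $\mc{T}$ is a genuine $m$-partition of $P$: the sets $T_j$ cover $P$ since $\bigcup_j T_j = \bigcup_j \bigcup_i T_{i,j} = \bigcup_i P_i = P$, and they are pairwise disjoint because each $\mc{T}_i$ is itself a partition of $P_i$ and the $P_i$ are disjoint. That part is routine.

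The main content is the tolerance bound. I would fix an arbitrary $R \subseteq P$ with $|R| \le t = \sum_{i=1}^{k} t_i + k - 1$ and let $R_i = R \cap P_i$, so that $\sum_i |R_i| = |R|$. The key observation is that at least one index $i^\ast$ must satisfy $|R_{i^\ast}| \le t_{i^\ast}$: otherwise every $|R_i| \ge t_i + 1$, and then $|R| = \sum_i |R_i| \ge \sum_i t_i + k > t$, contradicting the size bound on $R$. This is the pigeonhole step, and it is the only place where the specific value $t = \sum t_i + k - 1$ is used.

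Once such an $i^\ast$ is fixed, I would exploit the $t_{i^\ast}$-tolerance of $\mc{T}_{i^\ast}$ to obtain a point
\[
  x \in \bigcap_{j=1}^{m} \conv\bigl(T_{i^\ast,j} \setminus R_{i^\ast}\bigr).
\]
Since $T_j \setminus R = \bigcup_{i=1}^{k} (T_{i,j} \setminus R_i) \supseteq T_{i^\ast,j} \setminus R_{i^\ast}$ for every $j$, monotonicity of $\conv$ yields $\conv(T_j \setminus R) \supseteq \conv(T_{i^\ast,j} \setminus R_{i^\ast})$, and so $x \in \bigcap_{j=1}^{m} \conv(T_j \setminus R)$. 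As $R$ was arbitrary with $|R| \le t$, this establishes that $\mc{T}$ is a $t$-tolerant Tverberg $m$-partition of $P$.

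I do not anticipate a genuine obstacle here; the lemma is really a clean counting statement, and the only subtlety is recognising that the ``$-1$'' in $\sum t_i + k - 1$ is exactly what is needed to force at least one $R_i$ down to its own tolerance threshold. The proof does not require Helly's theorem or any convex-geometric input beyond the fact that each individual $\mc{T}_i$ survives removal of up to $t_i$ points, which is given.
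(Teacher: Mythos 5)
Your proof is correct and follows the same pigeonhole argument as the paper: since $t = \sum_i t_i + k - 1 < \sum_i (t_i+1)$, some index $i^\ast$ must receive at most $t_{i^\ast}$ removed points, and the $t_{i^\ast}$-tolerance of $\mc{T}_{i^\ast}$ together with $T_{i^\ast,j}\subseteq T_j$ carries the conclusion. Your write-up is slightly more explicit (naming $R_i = R\cap P_i$, checking the partition property, and invoking monotonicity of $\conv$), but it is the same proof.
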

\begin{proof}
  Take $R \subseteq P$ with $|R|=t$. As
  $t = \sum_{i=1}^k t_i +k -1 < \sum_{i=1}^k (t_i + 1)$, there is an $i$ with
  $|{P}_i \cap R| \leq t_i$.
  Since $\mc{T}_i$ is $t_i$-tolerant, we have
  $\bigcap_{j=1}^{m}\conv(T_{i,j} \setminus R) \neq \emptyset$. Because each
  $T_{i,j}$ is contained in the corresponding set $T_j$ of $\mc{T}$,
  the convex hulls of the elements in $\mc{T}$ still intersect after the removal
  of $R$.
\end{proof}

From a mathematical perspective, the main motivation for
introducing tolerance to Tverberg partitions is the possibility
to achieve better bounds than by just combining regular Tverberg
partitions. This provides deeper insight in the intersection pattern
of convex sets. Nevertheless, Lemma~\ref{apr:lem:cb} is interesting
from an algorithmic viewpoint as it enables us to benefit from
existing approximation algorithms for regular Tverberg partitions
by implying a simple algorithm: compute regular Tverberg
partitions for disjoint subsets of $P$ and then merge them using
Lemma~\ref{apr:lem:cb}. This proves Proposition~\ref{apr:prop:cb}:

\begin{proposition}[Proposition~\ref{apr:prop:cb} restated]
  Let $P\subset \real^d$ and let $\mc{A}$ be an
  algorithm that computes a regular Tverberg $m$-partition
  for any point set of size $n_\mc{A}(m)$ in time $T_{\mc{A}}(m)$.
  Then, a $\left(\lfloor |P|/n_{\mc{A}}(m) \rfloor -1\right)$-tolerant Tverberg
  $m$-partition for $P$ can be computed in time
  $\mc{O}\left(T_{\mc{A}}(m) \cdot |P| / n_{\mc{A}}(m) \right)$.
\end{proposition}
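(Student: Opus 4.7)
My plan is to view this proposition as a direct algorithmic consequence of Lemma~\ref{apr:lem:cb}, with the only mild subtlety being how to handle point counts that are not exact multiples of $n_{\mc{A}}(m)$.

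The first step is to set $k = \lfloor |P| / n_{\mc{A}}(m) \rfloor$ and split $P$ into $k$ pairwise disjoint subsets $P_1, \ldots, P_k$, each of size exactly $n_{\mc{A}}(m)$, together with a leftover set $P_0$ of size $|P| - k \cdot n_{\mc{A}}(m) < n_{\mc{A}}(m)$. This split can be performed in linear time simply by scanning $P$. Next, I would run algorithm $\mc{A}$ on each $P_i$ ($1 \le i \le k$) to obtain regular Tverberg $m$-partitions $\mc{T}_1, \ldots, \mc{T}_k$; each call costs $T_{\mc{A}}(m)$, so this phase runs in time $\bo{k \cdot T_{\mc{A}}(m)} = \bo{T_{\mc{A}}(m) \cdot |P| / n_{\mc{A}}(m)}$, matching the claimed bound.

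Then I would invoke Lemma~\ref{apr:lem:cb} with $t_1 = \cdots = t_k = 0$ (since each $\mc{T}_i$ is merely a regular Tverberg partition). The lemma yields a Tverberg $m$-partition $\mc{T} = \{T_1, \ldots, T_m\}$ of $\bigcup_{i=1}^{k} P_i$ with tolerance $\sum_{i=1}^{k} 0 + k - 1 = k - 1 = \lfloor |P| / n_{\mc{A}}(m) \rfloor - 1$, as required. To finish, I need to reincorporate the leftover points in $P_0$: distribute them arbitrarily among the sets $T_1, \ldots, T_m$. Adding points to a set of a Tverberg partition can only enlarge the corresponding convex hull, so the intersection property is preserved, and the same holds after removing any $t = k-1$ points, since the adversary's removals are at least as harmful with fewer points. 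Hence $\mc{T}$ remains $(k-1)$-tolerant on the full set $P$.

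The construction of the subsets and the redistribution of leftover points both take $\bo{|P|}$ time, and merging the partitions requires only $\bo{|P|}$ set operations, which are all dominated by the $k$ calls to $\mc{A}$. The total running time is therefore $\bo{T_{\mc{A}}(m) \cdot |P| / n_{\mc{A}}(m)}$. There is no real obstacle in this proof: Lemma~\ref{apr:lem:cb} supplies the combinatorial core, and the only point that requires a sentence of care is the observation that the trailing $|P| \bmod n_{\mc{A}}(m)$ points can be absorbed harmlessly into existing parts rather than forming their own (potentially non-Tverberg) partition.
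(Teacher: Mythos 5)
Your proof is correct and follows essentially the same route as the paper's: split $P$ into $\lfloor |P|/n_{\mc{A}}(m)\rfloor$ disjoint blocks of size $n_{\mc{A}}(m)$, apply $\mc{A}$ to each, and merge via Lemma~\ref{apr:lem:cb} with all $t_i=0$. Your only addition is the explicit (and correct) observation that the leftover $|P| \bmod n_{\mc{A}}(m)$ points can be absorbed into the resulting parts without harming the tolerance, a detail the paper's one-line proof glosses over.
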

\begin{proof}
  We split $P$ into $\lfloor |P|/n_{\mc{A}} \rfloor$ disjoint sets and use
  $\mc{A}$ to obtain for each subset a regular Tverberg partition. Applying
  Lemma~\ref{apr:lem:cb}, we obtain a $(\lfloor |P|/n_{\mc{A}} \rfloor
  -1)$-tolerant Tverberg $m$-partition. Since the merging step in
  Lemma~\ref{apr:lem:cb} takes linear time in $|P|$, the total running time is
  $\mc{O}\left(T_{\mc{A}}(m) \cdot |P| / n_{\mc{A}}(m) \right)$, as
  claimed.
\end{proof}

Table~\ref{apr:tab:conc} shows specific values for Proposition~\ref{apr:prop:cb}
applied to Miller \& Sheehy's and Mulzer \& Werner's algorithm.

\begin{table}
  \begin{center}
    \newcommand{\tableheader}[1]{\multicolumn{1}{|c|}{\textbf{#1}}}
    \renewcommand{\arraystretch}{1.5}
    \begin{tabular}{|l|c|c|}
      \hline
      \tableheader{Algorithm} &
      \tableheader{Tolerance} &
      \tableheader{Running time}
      \tabularnewline \hline \hline
      Proposition~\ref{apr:prop:cb} with Miller-Sheehy &
      $\lfloor |P|/2m(d+1)^2 \rfloor - 1$ &
      $m^{\bo{\log d}} d^{\bo{\log d}} |P|$
      \tabularnewline \hline
      Proposition~\ref{apr:prop:cb} with Mulzer-Werner &
      $\lfloor|P|/4m(d+1)^3\rfloor - 1$  &
      $d^{\bo{\log d}}|P|$
      \tabularnewline \hline
    \end{tabular}
  \end{center}
  \caption{Proposition~\ref{apr:prop:cb} applied to existing approximation
    algorithms for the regular Tverberg problem.}
  \label{apr:tab:conc}
\end{table}

\begin{remark}
Lemma~\ref{apr:lem:cb} gives a quick proof of a slightly weaker version of the
\Soberon{}-Strausz bound: partition $P$ into $t+1$ disjoint sets of size at least
$\lfloor |P| / (t+1)
\rfloor$. By Tverberg's theorem, for each subset there exists a Tverberg
partition with no tolerance of size
$\lceil \lfloor |P| / (t+1) \rfloor / (d+1) \rceil$. Using Lemma~\ref{apr:lem:cb},
we obtain a $t$-tolerant Tverberg partition of size
$\lceil \lfloor |P| / (t+1) \rfloor / (d+1) \rceil \geq \lceil |P| / (t+1)(d+1)
\rceil -1$ of $P$, which is at most one less than the \Soberon{}-Strausz bound.
This weaker bound was also stated by \GarciaColin{}~\cite{GarciaColin2007}.
Again, as already mentioned after Lemma~\ref{apr:lem:cb}, this is
interesting mostly from an algorithmic perspective since it implies
that computing slightly worse tolerant Tverberg partitions than
guaranteed by the \Soberon{}-Strausz bound is polynomial-time
equivalent to computing regular Tverberg partitions.
\end{remark}

\section{Hardness of Tolerance Testing}\label{sec:complexity}
Teng~\cite{Teng1992} proved that deciding whether a point is a
centerpoint (\textsc{TestingCenter}) is coNP-complete. We show the same for
deciding whether a Tverberg partition is $t$-tolerant
(\textsc{TestingTolerantTverberg}) by a reduction from \textsc{TestingCenter}.
The problems are formally defined as follows:

\begin{problem}[\textsc{TestingCenter}]
  \begin{enumerate}
    \item[]
      \begin{description}
    \item[Given] a point set $P\subset \real^d$, and a centerpoint candidate
      $c\in\real^d$, where $d$ is part of the input.
    \item[Decide] whether $c$ is a centerpoint of $P$.
  \end{description}
  \end{enumerate}
\end{problem}

\begin{problem}[\textsc{TestingTolerantTverberg}]
  \begin{enumerate}
    \item[]
      \begin{description}
    \item[Given] a point set $P\subset \real^d$, a
      partition $\mc{T}$ of $P$, and a conjectured tolerance $t\in\nat$, where
      $d$ is part of the input.
    \item[Decide] whether $\mc{T}$ is a $t$-tolerant Tverberg partition of $P$.
  \end{description}
  \end{enumerate}
\end{problem}

Note that the size of the partition $\mc{T}$ in the definition of
\textsc{TestingTolerantTverberg} can be constant.

The following lemma is folklore. We include the proof for completeness.
It is used in the reduction to connect the tolerance of a
Tverberg partition with the depth of points in the intersection of the convex
hulls.

\begin{lemma}\label{com:lem:depth}
  Let $P\subset\real^d$ and let $c\in\real^d$. Then $c$ has depth $t+1$ w.r.t.
  $P$ if and only if for all subsets $R\subset P$ with $|R| \leq t$, we have $c
  \in \conv(P \setminus R)$.
\end{lemma}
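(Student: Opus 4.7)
The plan is to prove both implications by contrapositive using the separating hyperplane theorem, taking care to handle the closed/open halfspace distinction consistently with the definition of Tukey depth.

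For the forward direction ($\Rightarrow$), I would assume that some $R\subset P$ with $|R|\leq t$ satisfies $c\notin\conv(P\setminus R)$. Since $\conv(P\setminus R)$ is a closed convex set not containing $c$, the separating hyperplane theorem gives a closed halfspace $H$ that contains $c$ but is disjoint from $\conv(P\setminus R)$. Then $P\cap H\subseteq R$, so $|P\cap H|\leq t$, contradicting the depth of $c$ being $t+1$.

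For the reverse direction ($\Leftarrow$), I would assume $c$ has depth at most $t$. By definition, there is a closed halfspace $H=\{x\in\real^d \mid a\cdot x\geq b\}$ with $c\in H$ and $|P\cap H|\leq t$. Setting $R=P\cap H$, every point of $P\setminus R$ lies in the open halfspace $\{x \mid a\cdot x < b\}$, so $\conv(P\setminus R)$ is contained in this open halfspace as well. Since $a\cdot c\geq b$, we get $c\notin\conv(P\setminus R)$, and $|R|\leq t$, contradicting the hypothesis.

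The only subtlety is to be consistent about closed versus open halfspaces: in the first direction the separating halfspace is closed around $c$ and strictly excludes $\conv(P\setminus R)$; in the second direction the witnessing halfspace is closed on the side of $c$, so its complement is open and cannot contain $c$. Both arguments are short, and no further machinery beyond separation of a point from a closed convex set is needed.
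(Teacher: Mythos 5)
Your proof is correct and follows essentially the same route as the paper's: both directions are argued by contrapositive, the forward direction via a separating halfspace and the reverse via taking $R = P \cap H$ for a witnessing halfspace $H$. Your extra care about the closed/open halfspace distinction is a fine addition but does not change the substance of the argument.
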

\begin{proof}
  We prove both directions by showing the contrapositive.\\
  \indent``$\Rightarrow$''
        Suppose there is some $R\subset P, |R|\leq t$ with $c\notin \conv(P
        \setminus R)$. Then, there is a
        half-space $h^+$ that contains $c$ but no points from $\conv(P \setminus R)$.
        Thus,
        $c\in h^+$ and $| P \cap h^+ | \leq |R| \leq t$,
        and hence $c$ has depth at most $t$ w.r.t. $P$.

    ``$\Leftarrow$''
        Assume $c$ has depth $t'\leq t$ w.r.t. $P$. Let $h^+$ be a half-space that
        contains $c$ and $t'$ points from $P$. Set $R=h^+ \cap P$. Then,
        $|R|\leq t$ and $c \notin \conv(P \setminus R)$.
\end{proof}

We are now ready to prove Theorem~\ref{thm:complexity}:
\begin{theorem}[Theorem~\ref{thm:complexity} restated]
  \textsc{TestingTolerantTverberg} is coNP-complete.
\end{theorem}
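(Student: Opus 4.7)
My plan is to establish both coNP membership and coNP-hardness separately. For membership, I would observe that a short witness that $\mc{T}$ is \emph{not} $t$-tolerant is simply a set $R\subseteq P$ with $|R|\leq t$ for which $\bigcap_i\conv(T_i\setminus R)=\emptyset$. Checking emptiness of an intersection of finitely many convex hulls of point sets is a standard polynomial-time linear-programming feasibility test, so the complement of \textsc{TestingTolerantTverberg} lies in NP.

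For coNP-hardness I would reduce from \textsc{TestingCenter}, which Teng proved to be coNP-complete. Given $(P,c)$ with $n:=|P|$, set $t:=\lceil n/(d+1)\rceil-1$, so that Lemma~\ref{com:lem:depth} translates ``$c$ is a centerpoint of $P$'' into ``for every $R_0\subseteq P$ with $|R_0|\leq t$, $c\in\conv(P\setminus R_0)$''. Construct $P':=P\cup\{c_1,\ldots,c_{t+1}\}$, where $c_1=\cdots=c_{t+1}=c$ are $t+1$ labeled copies of the centerpoint candidate, and output the instance $(P',\mc{T},t)$ with $\mc{T}:=\{T_1,T_2\}$, $T_1:=P$, and $T_2:=\{c_1,\ldots,c_{t+1}\}$. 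The reduction is clearly polynomial, and it already produces a partition of size~$2$, matching the claimed strengthened hardness.

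Correctness will hinge on a single observation: for any candidate removal set $R\subseteq P'$ with $|R|\leq t$, we have $|T_2\setminus R|\geq 1$ because $|T_2|=t+1$, so $\conv(T_2\setminus R)=\{c\}$. Hence $\conv(T_1\setminus R)\cap\conv(T_2\setminus R)\neq\emptyset$ collapses to $c\in\conv(P\setminus(R\cap P))$, and ranging $R$ over subsets of $P'$ of size at most $t$ is equivalent to ranging $R\cap P$ over subsets of $P$ of size at most $t$; by Lemma~\ref{com:lem:depth} this is exactly the centerpoint condition. The only subtle point I foresee is the use of coincident copies of $c$ in $T_2$; I would handle it via the standard Tverberg convention that inputs are treated as labeled multisets, since a strict set-theoretic construction cannot guarantee that every $(|T_2|-t)$-subset of $T_2$ still contains $c$ in its convex hull.
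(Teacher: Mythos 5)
Your membership argument matches the paper's: a set $R\subseteq P$ with $|R|\leq t$ witnessing $\bigcap_i\conv(T_i\setminus R)=\emptyset$ is checkable by an LP feasibility test, placing the problem in coNP. Your hardness reduction also starts from \textsc{TestingCenter} and uses the same key idea of building a second block $T_2$ that still ``sees'' $c$ after any $t$ deletions, so that tolerance testing collapses to the depth condition of Lemma~\ref{com:lem:depth}. So the high-level plan is right.

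The gap is in the gadget. You take $T_2$ to be $t+1$ coincident copies of $c$, and you correctly foresee that this only works if the input is treated as a labeled multiset. But the paper's \textsc{TestingTolerantTverberg} is defined on a \emph{point set} $P\subset\real^d$ (with $\mc{T}$ a partition of that set), and nothing in the paper adopts a multiset convention; a reduction that outputs $t+1$ identical points therefore does not produce a valid instance of the target problem as defined. The ``standard Tverberg convention'' you invoke is not the one in force here, and you cannot simply perturb the copies of $c$ apart without losing the crucial identity $\conv(T_2\setminus R)=\{c\}$. The paper avoids this by lifting to $\real^{d+1}$: identify $\real^d$ with the hyperplane $h:x_{d+1}=0$, take the line $\ell$ through $c$ orthogonal to $h$, and let $T$ consist of $t+1$ distinct points of $\ell$ strictly below $h$ and $t+1$ strictly above. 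Then $\conv(P)\cap\conv(T)=\{c\}$ by orthogonality, and after removing any $t$ points at least one point of $T$ survives on each side of $h$, so $c\in\conv(T\setminus R)$. This gives a genuine point set and makes the equivalence go through without any multiset caveat. To repair your argument you should replace the $t+1$ coincident copies by this $2(t+1)$-point, dimension-raising gadget (or an equivalent distinct-point construction); as written, the reduction does not type-check against the problem definition.
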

\begin{proof}
  We first check that \textsc{TestingTolerantTverberg} is indeed contained in
  coNP. Let $\mathcal{T}$ be a Tverberg partition of $P\subset
  \real^d$ that is claimed to have tolerance $t$. A witness to
  $\mathcal{T}$ not being a $t$-tolerant Tverberg partition is a subset
  $R\subseteq P$ of size at most $t$ such that
  $\bigcap_{T_i\in\mathcal{T}}\conv(T_i \setminus R) = \emptyset$. Checking if
  $R$ is a witness reduces to testing the feasibility  of the linear
  program defined by the following constraints for each element
  $T_i$ in \mc{T}:
  \begin{align*}
    \alpha_{i,1}\, p_{i,1} + \alpha_{i,2}\, p_{i,2} + \dots + \alpha_{i,|T_i
      \setminus R|}\, p_{i, |T_i \setminus R|} - x & = 0 \\
    \alpha_{i,1}+ \alpha_{i,2}+ \dots + \alpha_{i,|T_i\setminus R|} & = 1 \\
    \forall j \in \{1,2,\ldots,|T_i \setminus R|\}: \alpha_{i,j} & \geq 0 \qquad, &
  \end{align*}
  where $p_{i,j}$ denotes the $j$th point in $T_i \setminus R$.
  The linear program is feasible if and only if $\bigcap_{T_i\in\mc{T}}
  \conv(T_i \setminus R) \neq \emptyset$, i.e., if $R$ is not a witness.
  Since the number of constraints and variables is polynomial in the input size,
  feasibility checking of the above linear program can be carried out in
  polynomial time.

  Let $(P\subset\real^{d}, c\in\real^{d})$ be an input to
  \textsc{TestingCenter}. We embed the vector space $\real^{d}$ in $\real^{d+1}$
  by identifying it with the hyperplane $h: x_{d+1} = 0$.  Define $t=\lceil |P|
  / (d+1) \rceil -1$ and let $\ell$ be the line that is orthogonal to $h$ and
  passes through $c$. Furthermore, let $T^-$ and $T^+$ be sets of $t+1$
  arbitrary points in $\ell\cap h^-$ and $\ell\cap h^+$, respectively.  Set $T =
  T^- \cup T^+$. We claim that $\{P, T\}$ is a Tverberg 2-partition for $P\cup
  T$ with tolerance $t$ if and only if $c$ is a centerpoint of $P$; see
  Figure~\ref{com:fig:red}.

  ``$\Rightarrow$'' Assume $\{P, T\}$ is a $t$-tolerant Tverberg
    2-partition. By construction of $T$, we have $\conv(P) \cap \conv(T)  = \{c\}$.
    Thus, $c$ lies in the intersection
    of both convex hulls even if an arbitrary subset of size at most $t$ is removed.
    Lemma~\ref{com:lem:depth} implies that $c$ has depth $t+1  = \lceil |P| / (d+1)
    \rceil$ w.r.t. $P$, so $c$ is a centerpoint for $P$.

  ``$\Leftarrow$'' Assume $c$ is a centerpoint for $P$. By
      definition, $c$ has depth at least
      $\lceil|P|/(d+1)\rceil = t+1$ w.r.t. $P$. Lemma~\ref{com:lem:depth}
      then implies that $c$ is contained in the
      convex hull of $P$ even if any $t$ points from $P$ are removed. Since
      $T$ contains $t+1$ points on both sides of a line through $c$, $c$ is also
      contained in $\conv(T)$ if any $t$ points from $T$ are removed.
      Thus,
      $\{P, T\}$ is a $t$-tolerant Tverberg 2-partition for $P\cup T$.
\end{proof}
\begin{figure}[htbp]
  \begin{center}
    \includegraphics[width=0.5\textwidth]{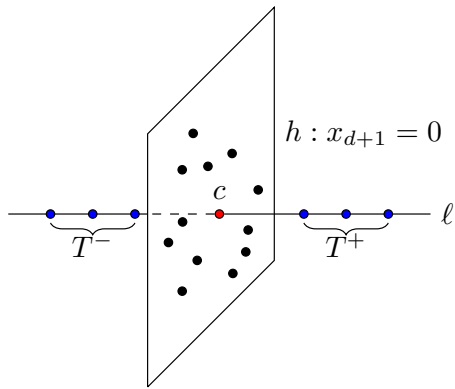}
  \end{center}
  \caption{Reduction of \textsc{TestingCenter} to
    \textsc{TestingTolerantTverberg}}
  \label{com:fig:red}
\end{figure}

\section{Conclusion}
We have shown that for each set $P\subset\real$ of size $m(t+2)-1$, a
$t$-tolerant Tverberg partition of size $m$ can be found in time
$\bo{mt\log t}$. The bound on the size of $P$ is tight, and it improves
the \Soberon{}-Strausz bound in one dimension.
Combining this with a lifting method, we could also get improved bounds in
two dimensions and an efficient algorithm for tolerant Tverberg partitions in
any fixed dimension.
However, the running time is exponential in the dimension.

This motivated us to look for a way of reusing the existing technology for
the regular Tverberg problem.
We have presented a reduction to the regular Tverberg problem
that enables us to
reuse the approximation algorithms by Miller \& Sheehy and Mulzer \& Werner.

Finally, we proved that testing whether a given Tverberg partition is of
some tolerance $t$ is coNP-complete.
Unfortunately, this does not imply anything about the complexity of
finding tolerant Tverberg partitions. It is not even clear
whether computing tolerant Tverberg partitions is harder than computing
regular Tverberg partitions. However, we could show that computing tolerant
Tverberg partitions with smaller tolerance than guaranteed by the
\Soberon{}-Strausz bound is polynomial-time equivalent to computing regular
Tverberg partitions.

It remains open whether the bound by \Soberon{} and Strausz is tight for $d > 2$.
We believe that our results in one and two dimensions indicate that
the bound can be improved also in general dimension.
Another open problem is finding a \emph{pruning strategy} for tolerant
Tverberg partitions. By this, we mean an
algorithm that efficiently reduces the sizes of the sets in a $t$-tolerant Tverberg
partition without deteriorating the tolerance.
Such a pruning strategy could be used to improve the quality of our algorithms.
In Miller \& Sheehy's and Mulzer \& Werner's algorithms,
\Caratheodory{}'s theorem was used for this task. Unfortunately, this result does not
preserve the tolerance of the pruned partitions. The generalized tolerant
\Caratheodory{} theorem~\cite{Montejano2011} also does not seem to help. It remains an
interesting problem to develop criteria for superfluous points
in tolerant Tverberg partitions.
\\

\noindent\textbf{Acknowledgments.}
We would like to thank the anonymous reviewers for their helpful
and detailed comments that helped to improve the quality of the
paper. In particular, we would like to thank an anonymous referee
for pointing out that the algorithm in
Proposition~\ref{apr:prop:cb} could be greatly simplified.

\bibliographystyle{abbrv}
\bibliography{library}

\newcommand{\SortNoop}[1]{}
\begin{thebibliography}{10}

\bibitem{Chan2004}
T.~M. Chan.
\newblock An optimal randomized algorithm for maximum {T}ukey depth.
\newblock In {\em Proc. 15th Annu. ACM-SIAM Sympos. Discrete Algorithms
  (SODA)}, pages 430--436, 2004.

\bibitem{Clarkson1996}
K.~L. Clarkson, D.~Eppstein, G.~L. Miller, C.~Sturtivant, and S.-H. Teng.
\newblock Approximating center points with iterative {R}adon points.
\newblock {\em Internat. J. Comput. Geom. Appl.}, 6:357--377, 1996.

\bibitem{Cormen2009}
T.~H. Cormen, C.~E. Leiserson, R.~L. Rivest, and C.~Stein.
\newblock {\em Introduction to Algorithms}.
\newblock The MIT Press, 3rd edition, 2009.

\bibitem{GarciaColin2007}
N.~Garc\'{i}a-Col\'{i}n.
\newblock {\em Applying {T}verberg Type Theorems to Geometric Problems}.
\newblock PhD thesis, University College London, 2007.

\bibitem{Larman1972}
D.~Larman.
\newblock On sets projectively equivalent to the vertices of a convex polytope.
\newblock {\em Bulletin of the London Mathematical Society}, 4(1):6--12, 1972.

\bibitem{Matouvsek2002}
J.~Matou\v{s}ek.
\newblock {\em Lectures on Discrete Geometry}.
\newblock Springer, 1st edition, 2002.

\bibitem{Miller2010}
G.~L. Miller and D.~R. Sheehy.
\newblock Approximate centerpoints with proofs.
\newblock {\em Comput. Geom. Theory Appl.}, 43:647--654, 2010.

\bibitem{Montejano2011}
L.~Montejano and D.~Oliveros.
\newblock Tolerance in {H}elly-type theorems.
\newblock {\em Discrete Comput. Geom.}, 45(2):348--357, 2011.

\bibitem{Mulzer2013}
W.~Mulzer and D.~Werner.
\newblock Approximating {T}verberg points in linear time for any fixed
  dimension.
\newblock {\em Discrete Comput. Geom.}, 50(2):520--535, 2013.

\bibitem{Rado1946}
R.~Rado.
\newblock A theorem on general measure.
\newblock {\em Journal of the London Mathematical Society}, 1:291--300, 1946.

\bibitem{Alfonsin2001}
J.~Ram{\'i}rez~Alfons{\'i}n.
\newblock Lawrence oriented matroids and a problem of {M}c{M}ullen on
  projective equivalences of polytopes.
\newblock {\em European Journal of Combinatorics}, 22(5):723--731, 2001.

\bibitem{Sarkaria1992}
K.~Sarkaria.
\newblock {T}verberg's theorem via number fields.
\newblock {\em Israel Journal of Mathematics}, 79:317--320, 1992.

\bibitem{Soberon2014}
P.~Sober\'{o}n.
\newblock Equal coefficients and tolerance in coloured {T}verberg partitions.
\newblock {\em Combinatorica}, pages 1--18, 2014.

\bibitem{Soberon2012}
P.~Sober\'{o}n and R.~Strausz.
\newblock A generalisation of {T}verberg's theorem.
\newblock {\em Discrete Comput. Geom.}, 47:455--460, 2012.

\bibitem{Teng1992}
S.-H. Teng.
\newblock {\em Points, spheres, and separators: a unified geometric approach to
  graph partitioning}.
\newblock PhD thesis, Carnegie Mellon University, 1992.

\bibitem{Tverberg1966}
H.~Tverberg.
\newblock A generalization of {R}adon's theorem.
\newblock {\em Journal of the London Mathematical Society}, 41:123--128, 1966.

\end{thebibliography}
\end{document}